\newcommand{\B}{\mathcal{B}}
\newcommand{\G}{\mathcal{G}}
\newcommand{\M}{\mathcal{M}}
\newcommand{\N}{\mathcal{N}}
\newcommand{\T}{\mathcal{T}}
\newcommand{\U}{\mathcal{U}}
\renewcommand{\S}{\mathcal{S}}
\newcommand{\X}{\mathcal{X}}
\newcommand{\Z}{\mathcal{Z}}
\newcommand{\IE}{\mathbb{E}}
\newcommand{\IV}{\mathbb{V}}
\newcommand{\IP}{\mathbb{P}}
\DeclareRobustCommand{\stirling}{\genfrac\{\}{0pt}{}}
\DeclareMathOperator*{\Seq}{\mbox{\sc Seq}}
\DeclareMathOperator*{\SeqUn}{\mbox{\sc Seq}_{\geq 1}}
\DeclareMathOperator*{\SeqDeux}{\mbox{\sc Seq}_{\geq 2}}
\DeclareMathOperator*{\Set}{\mbox{\sc Set}}
\newtheorem{theorem}{Theorem}
\newtheorem{proposition}[theorem]{Proposition}
\newtheorem{lemma}[theorem]{Lemma}
\newtheorem{corollary}[theorem]{Corollary}
\theoremstyle{definition}
\newtheorem{Definition}[theorem]{Definition}
\title{Ranked Schröder Trees\thanks{
		This research is partially supported by the ANR project {\em MetACOnc}, ANR-15-CE40-0014.}
}
\author{Olivier Bodini\thanks{
    Laboratoire d'Informatique de Paris-Nord,
	CNRS UMR 7030 - Institut Galil\'ee - Universit\'e Paris-Nord,
	99, avenue Jean-Baptiste Cl\'ement, 93430 Villetaneuse, France.
	Email: \url{{Olivier.Bodini, Mehdi.Naima}@lipn.univ-paris13.fr}.}\\
	\and
	Antoine Genitrini\thanks{
	    Sorbonne Universit\'e, CNRS,	
	    Laboratoire d'Informatique de Paris 6 - LIP6 - UMR 7606, F-75005 Paris, France.
		Email: \url{Antoine.Genitrini@lip6.fr}.}\\
	\and
	Mehdi Naima$^\dag$}
	\date{}
\begin{document}

\maketitle


\begin{abstract}\small\baselineskip=9pt
    In biology, a phylogenetic tree is a tool to represent the evolutionary relationship between species. 
    Unfortunately, the classical Schröder tree model is not adapted
    to take into account the chronology between the branching nodes.
    In particular, it does not answer the question: how many different phylogenetic 
    stories lead to the creation of $n$ species and what is the average time to get there? 
    In this paper, we enrich this model in two distinct ways in order to
    obtain two ranked tree models for phylogenetics, i.e. models coding chronology.

    For that purpose, we first develop a model of (strongly) increasing
    Schröder trees, symbolically described in the classical context of
    increasing labeling trees. Then we introduce a generalization for the
    labeling with some unusual order constraint in Analytic Combinatorics
    (namely the weakly increasing trees).

    Although these models are direct extensions of the Schröder tree model,
    it appears that they are also in one-to-one correspondence with several
    classical combinatorial objects.
    Through the paper, we present these links, exhibit some parameters in
    typical large trees and conclude the studies with efficient uniform
    samplers.\\
    
    \noindent \textbf{Keywords:} Phylogenetic tree; Ranked tree; Analytic Combinatorics;
    Permutations; Ordered Bell numbers; Uniform sampling.
\end{abstract}

\section{Introduction}
\label{sec:intro}

In biology a \emph{phylogenetic tree} is a classical tool to represent
the evolutionary relationship among species.
At each bifurcation, or multifurcation, of the tree, the descendant
species from distinct branches have distinguished themselves in some manner.

One of the first illustrations of an evolutionary tree was made by Darwin
in his book \emph{On The Origin of Species}~\cite{darwin1859}.
His idea was to represent the divergence of characters and species. 
Multifurcations represent a well-marked variety of a certain kind and
this process then continues on the new varieties and so on.
Interest grew in tree evolutions as these models give insight on how species  evolved.
Different tree models were proposed with the idea of finding trees
that fits best nowadays observations and data sets. 
These models of graphs include rooted, unrooted, labeled, unlabeled,
bifurcating or multifurcating trees or networks. 
By defining some metrics between these models, people develop algorithms
focusing on state space exploration or on tree inference.
For details on tree models in phylogenetics and inference algorithms
see the book of Felsenstein~\cite{Felsenstein2003}
and the one of Steel~\cite{Steel2016PhylogenyD} for a more recent survey
with combinatorial aspects also.
Thanks to the development of bioinformatics many tools have thus emerged,
in order to build automatically such tree diagrams.
Some examples of programs are \emph{PHYLIP}, a tool  for inferring 
phylogenetic trees~\cite{felsenstein1993phylip} or \emph{PAML} 
that is phylogenetic analyser based on the maximum likelihood~\cite{doi:10.1093/molbev/msm088}.
In order to develop these new tools several structural studies
have been realized to model correctly the fundamental parameters 
defined by biologists.

In 1870 Schröder presented an original model published 
into the paper \emph{Vier combinatorische Probleme}~\cite{Schroder70}.
The fourth problem presents a phylogenetic tree model enumerating
trees by their number of leaves.
See for example~\cite{felsenevo} for the phylogenetic interpretation.

While it has been highlighted that this first model is not adapted
to take into account the chronology between branching nodes belonging to two
distinct fringe subtrees, other approaches have been developed
to consider such a history of the evolution process.
In particular in the context of binary trees, we can mention
the stochastic model of Yule~\cite{Yule25} and its generalization by 
Aldous~\cite{Aldous96}.
Such tree models, including history evolution, are usually called \emph{ranked tree
	models} in phylogenetics.
But these new models are not based on the original Schröder tree model.
To the best of our knowledge, there seems to have been no attempt to
enrich Schröder's original model so as to encode the chronology of evolution.
\begin{figure}[ht]
	\begin{center}
		\begin{tabular}{c c}
			\begin{minipage}{2cm}
				\includegraphics[scale=0.2]{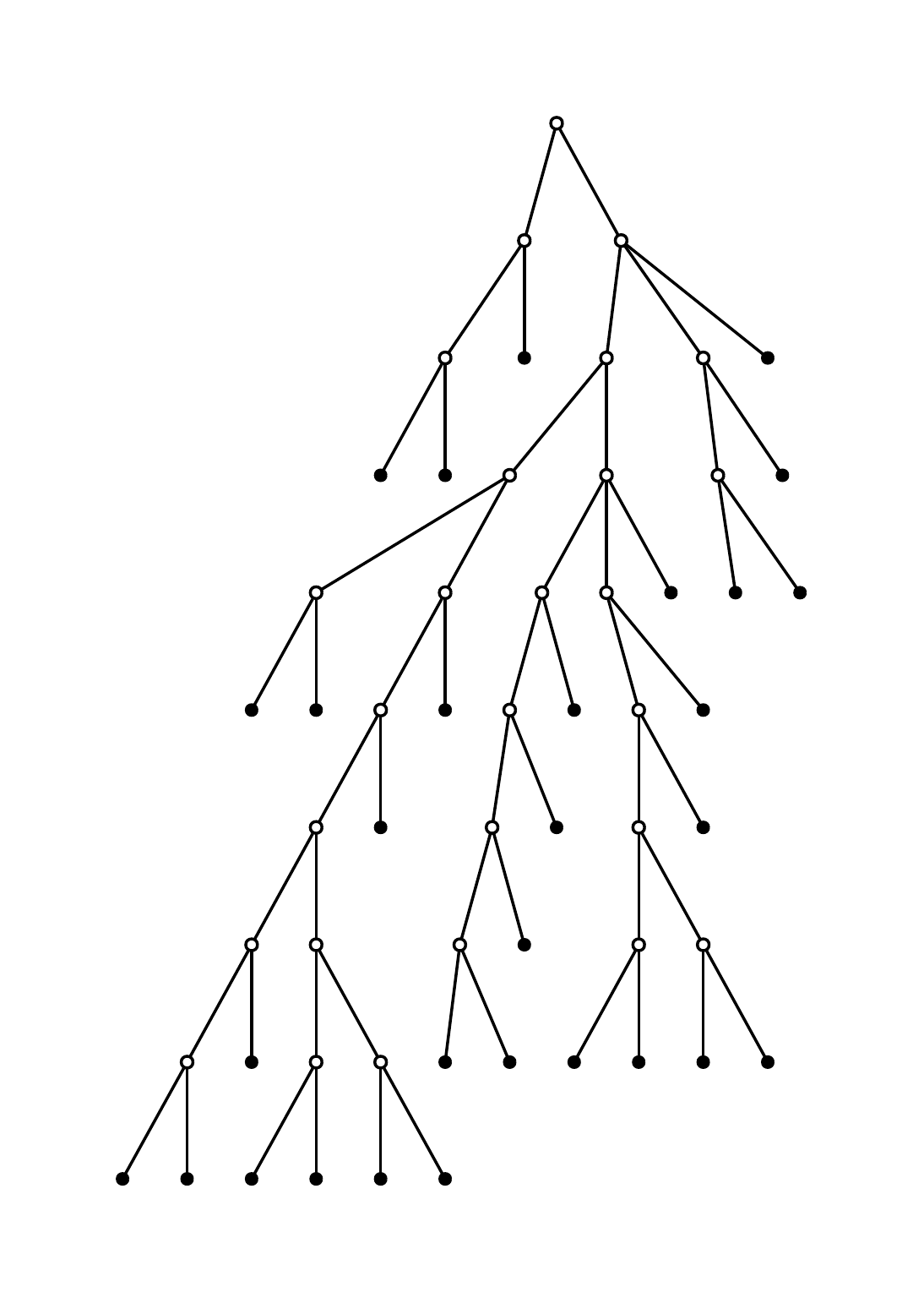}
				
				\vspace*{5.6cm}
				$\;$
			\end{minipage}
			&
			\begin{minipage}{5cm}
				\includegraphics[height=9cm, width=5cm]{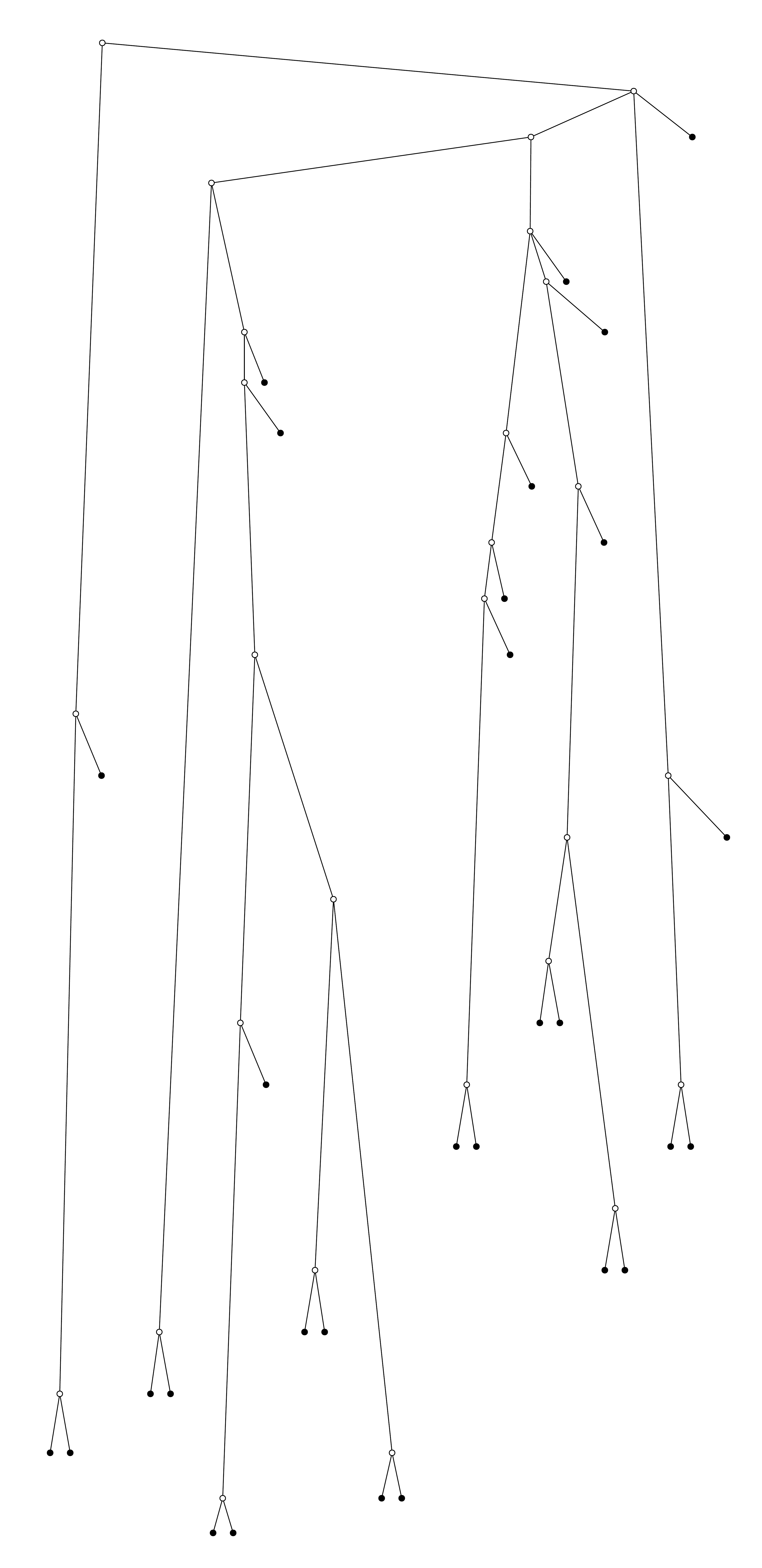}
			\end{minipage}
		\end{tabular}
		\caption{\label{fig:phylo}
			A Schröder tree: without chronological evolution (on the left handside),
			and with chronological evolution (on the right handside)}
	\end{center}
\end{figure}

So, the main goal of this paper consists in designing ranked tree models
based on the classical Schröder structure.
In Figure~\ref{fig:phylo} we have represented the same phylogenetic tree
on the left handside as a classical Schröder tree, and on the right handside
as a \emph{strongly increasing Schröder tree}, the first model we develop in this paper.

A first natural idea in this direction consists in considering the model of a \emph{recursive tree}.
Such a structure is a rooted labeled tree, whose root is labeled by $1$
and the successors of a given node, with label $\nu$, have a label greater than $\nu$.
Each integer between 1 and the total number of nodes is present once in the tree. 
Many variations of this model have been presented in the literature: 
see~\cite{Drmota09} and the references therein.
In this context, we are able to define a simple \emph{evolution process}
that allows to build very efficiently large trees with simple iterative rules.
Furthermore, usually the history of construction is naturally kept in the final
large tree through the increasing labeling.
It is also important to note that apparently minor changes on the growth rules
induce drastic differences in the typical properties of the considered models.
See for example the book
of Drmota~\cite{Drmota09} that presents many extensions of the classical
model (e.g. plane oriented recursive trees, fixed arity -- or out-degree -- recursive trees)
and details several quantitative studies for different fundamental parameters
like the profile of such tree models.

Let us recall the sample of a recursive tree
(uniformly for all trees of the same size, i.e. the same number of nodes):
\emph{start with the single size-$1$ tree}, reduced to a root,
and iterate: \emph{at step $n\in\{2,3,\dots\}$ choose uniformly a node in the tree
	under construction (labeled with an integer between $1$ and $n-1$)
	and attach to it a new node labeled by $n$}.

While many variations on these models have been studied, it is very interesting
to note that the increasing version of Schröder trees
seems not to have been analyzed. Our model is also very natural due to its
similarities to the probabilistic model of Yule trees (cf. e.g. \cite{MS01}) that 
take into account the chronological mutations of species.

We develop in this paper two distinct models for phylogenetic trees
satisfying in priority two new constraints: (1) to take into account the chronological
evolution and (2) to be efficient to simulate. Both models are based on some
\emph{increasingly labeling} of Schröder tree structures.

In this paper, we are focusing on the distinct histories possible
for a fixed number of final species. From a graph model point of view,
it consists in the quantitative study of the number of structures of a given size.
Furthermore, beyond some characteristics shared by our model and recursive trees,
or increasing fixed arity trees, we will point out several relations
to other classical combinatorial objects, in particular permutations, Stirling numbers.
Due to the many links to combinatorial objects, increasing 
Schröder trees are thus interesting in themselves as combinatorial structures.

The paper is organized as follows.
In Section~\ref{sec:high_inc} we introduce formally our first ranked phylogenetic tree model
and introduce a non classical point of view for the tree specification.
We present the enumeration of the trees and relate them to permutations.
Then we compute important parameters
of the model. We conclude this section with
the presentation of a linear algorithm
for the uniform sampling of trees.
Section~\ref{sec:weak_inc} is devoted to our second model for ranked
phylogenetic trees.
It is based on a non-classical way of increasingly labeling a tree structure.
The section is composed like the first one: after the enumeration
of the trees, we relate them to classical combinatorial objects, derive
some tree parameters and we finally conclude the section
with an efficient unranking algorithm for the uniform sampling of our trees.

Some technical proofs are detailed in the appendix due to obtain a clear paper structure.

\section{Strongly Increasing Schröder trees}
\label{sec:high_inc}

The first model we develop is based on a
almost classical notion of increasing labeling
in Analytic Combinatorics.

\subsection{The model and its context}
\label{sec:model}

The tree structure associated to strongly increasing Schröder tree
corresponds to \emph{Schröder trees}, i.e. the combinatorial class of 
\emph{rooted plane\footnote{A plane tree is such that the children of a node are ordered.}
	trees whose internal nodes have arity at least $2$}.
The reader can refer to~\cite[p. 69]{FS09} for some details.
The size of a Schröder tree is the number of leaves in the tree.
Note that in the tree structure neither the internal nodes, nor the leaves are labeled.
The combinatorial class~$\S$ of Schröder trees is specified as
$\S = \Z \cup \SeqDeux \S,$
that translates, via the classical \emph{symbolic method}
presented by Flajolet and Sedgewick~\cite{FS09},
into the following equation, 
$S(z) = z + \frac{S(z)^2}{1-S(z)}$,
satisfied by its ordinary generating function
$S(z) = \sum_{n\geq 1} s_n z^n$
where $s_n$ is the number of structures of size $n$ (i.e. with $n$ leaves).

In this section, we are interested in an increasingly labeled variation of Schröder trees.
\begin{Definition}
	A \emph{strongly increasing Schröder tree} has a tree structure that is a Schröder tree
	and moreover its internal nodes are labeled with the integers between $1$ and $\ell$ 
	(where $\ell$ is the number of internal nodes),
	in such a way that all labels are distinct and the sequence of labels
	in each path from the root to a leaf is increasing.
\end{Definition}
Note, in the Analytic Combinatorics context, 
such a labeling of trees is called
\emph{increasing labeling} (without the term \emph{strongly}).
In order to distinguish
clearly this first model from the second one presented in Section~\ref{sec:weak_inc}
we have added this term. 
But from here, inside this section we will use the classical
denotation increasing tree.
\begin{figure}[ht]
	\begin{center}
		\includegraphics[scale=0.4]{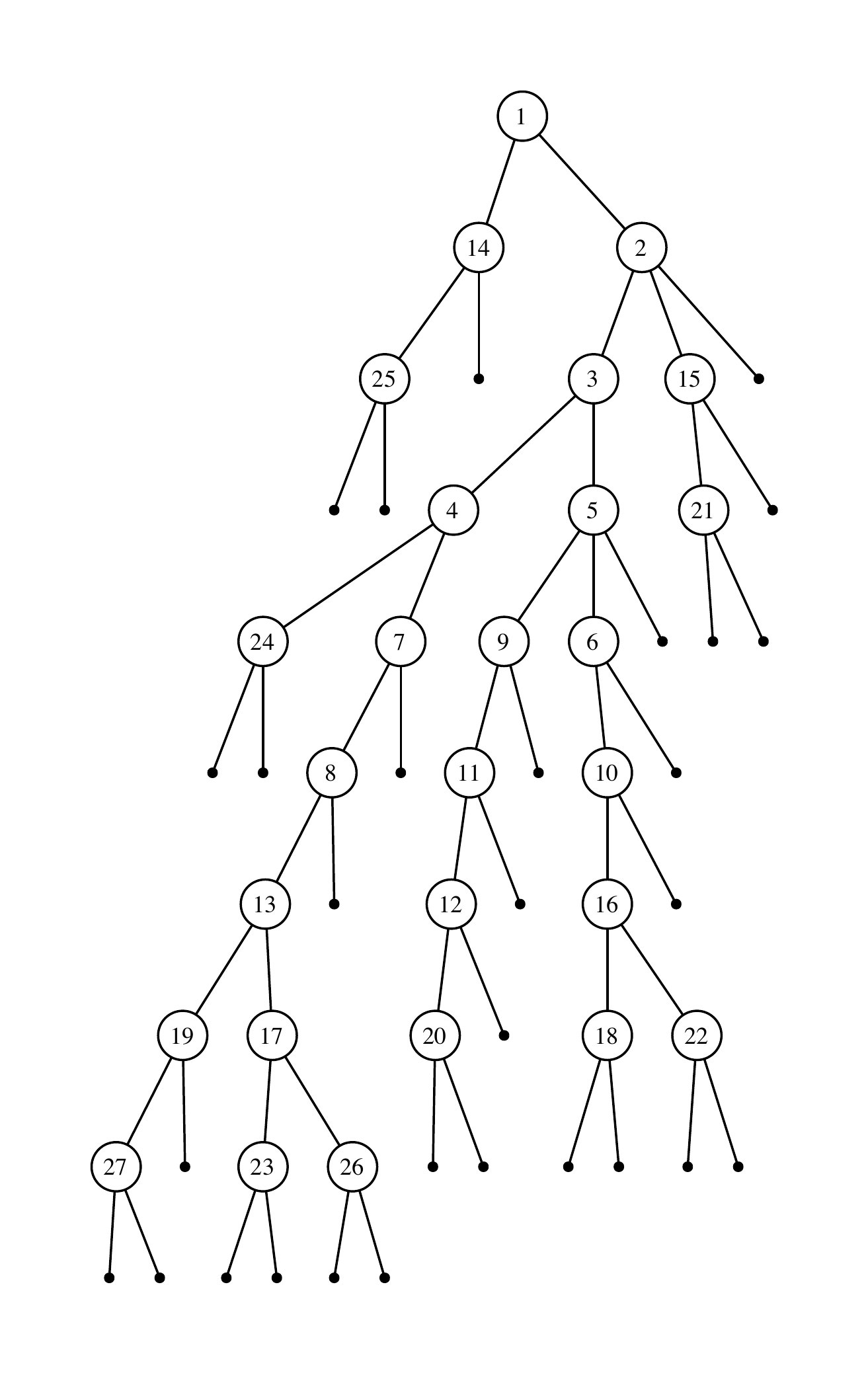}
		\caption{\label{fig:inc_tree}
			A strongly increasing Schröder tree}
	\end{center}
\end{figure}
Trees that are increasingly labeled can be in a certain extent
specified with the Greene's operator $\;^{\square} \star$
(cf. for example~\cite[p. 139]{FS09}).
Then the specification is translated into an equation
satisfied by the exponential generating function.
But in our context, the size of a tree is the number of leaves
(which corresponds to the final number of species),
and the increasingly labeling constraint
is related to the internal nodes.
We specify this class by using a second variable $u$ to mark the internal nodes.
\[
S^*(z,u) = \sum_{n,\ell} s_{n,\ell} z^n \frac{u^\ell}{\ell!} 
= z + \int_{v=0}^{u} \frac{S^*(z,v)^2}{1-S^*(z,v)}\mathrm{d}v.
\]
While the integral equation could be analyzed further, 
we prefer, in the following, to introduce an alternative way
to define our objects. This new approach is easier to handle and
it also naturally extends to define our second model of trees
developed in Section~\ref{sec:weak_inc}, namely the weakly increasing Schröder trees.

In Figure~\ref{fig:inc_tree} we have represented an increasing Schröder tree of size $30$
with $27$ internal nodes. 
This increasing tree is the same tree as the one represented in Figure~\ref{fig:phylo}
with the chronological evolution,
where the internal node labeled by $\ell$ is laid on level $\ell-1$, 
for all $\ell\in\{1, \dots, 27\}$.

In order to describe the building of increasing Schröder trees,
we introduce an \emph{evolution process}. It consists in an iterative
way that substitutes a leaf by an internal node attached to several leaves. More formally:
\begin{itemize}
	\item \emph{Start with a single (unlabeled) leaf;}
	\item Iterate the following process: \emph{at step $\ell$ (for $\ell\geq 1$),
		select a leaf and replace it by an internal node with label $\ell$
		attached to a sequence of at least two leaves.
	} 
\end{itemize}
Remark that the increasing labeling corresponds to the chronology 
of the tree building.

\subsection{Exact enumeration and relationship with permutations}
\label{sec:enum}

Let us denote by $\T$ the class of increasing Schröder trees. 
By using the evolution process 
we exhibit a specification for $\T$ as follows:
\begin{equation}\label{eq:specT2}
\T = \Z \cup \left(\Theta \T \times \SeqUn (\Z) \right).
\end{equation}
In this specification, $\Z$ stands for the leaves, and 
the operator $\Theta$ is the classical pointing operator 
(cf. in~\cite[p. 86]{FS09} for details).
The specification is a direct rewriting of the evolution process.
A tree is either reduced to a leaf or
at each step an atom (i.e. a leaf) is pointed
in the tree under construction and is replaced by an internal node
(whose labeling is deterministic: it corresponds
to the step number) attached to a sequence of at least two leaves
(the one that has been pointed is reused as the leftmost child,
it is the reason why the operator $\SeqUn$ does not 
contain the empty sequence and starts with sequences containing one element).

The symbolic method
translates this specification into a functional equation satisfied
by the generating series associated to the combinatorial class. 
Note that the functional equation is satisfied by
the \emph{ordinary generating series} associated to $\T$: $T(z) = \sum_{n\geq 1} t_n\; z^n$.
The increasing labeling is here transparent and thus the objects seems not labeled
(in fact, the leaves, marked by $\Z$ are really unlabeled):
\begin{equation} \label{eq:funeq}
T(z) = z+ \frac{z^2}{1-z} \; T'(z).
\end{equation}
By extracting the coefficients of the series, we derive the two following recurrences.
\begin{equation} \label{eq:rec}
\left\{ \begin{array}{l}
t_1 = 1, \quad t_2 = 1,\\
\text{and for } n>2,\\
t_n = n \cdot t_{n-1}.
\end{array}
\right.
\qquad
\left\{ \begin{array}{l}
t_1 = 1,\\
\text{and for } n>1,\\
t_n = \sum_{k=1}^{n-1} k\cdot t_k .
\end{array}
\right.
\end{equation}
Both recurrences are computed thanks to equation~\eqref{eq:funeq}.
The direct extraction $[z^n]\; T(z)$ exhibits the rightmost recurrence.
This recurrence exhibits that the calculation of the $n$-th term is of quadratic complexity
(in the number of arithmetic operations).
The leftmost recurrence is obtained by 
extracting $[z^n]\; (1-z)\cdot T(z)$ and then by simplifying the resulted equation.
Here the calculation of the $n$-th term is of linear complexity.

%
%
%
%
%

Thus we directly prove $t_n = n! / 2$ 
for all $n\geq 2$. The sequence $(t_n)_n$ appears  under the reference
\href{https://oeis.org/A001710}{\texttt{OEIS A001710}}\footnote{Throughout this paper,
	a reference \texttt{OEIS A$\cdots$} points to Sloane’s Online Encyclopedia of Integer
	Sequences \url{www.oeis.org}. }.
Observing the growth rate of $(t_n)_n$ proves that the 
ordinary generating series $T(z)$ is formal: its radius of convergence is~$0$.

\subsection{Analysis of typical parameters}
\label{strong:iteration}

Here we are interested in the quantitative
study of four distinct parameters of increasing Schröder trees.
The first one corresponds to the number of internal (labeled) nodes
of a size-$n$ tree. This fundamental parameter corresponds
to the number of steps in the evolution process that are
necessary to build the given tree. Recall the arity of internal
nodes is at least two, thus this parameter is not deterministic.
The second and the third parameters are related to the root node.
We study its arity and the number of leaves attached to it in a typical
tree of size $n$. 
But in a tree of size $n$ (tending to infinity)
all internal nodes whose labels are independent from~$n$ have the 
same characteristics than the root: thus these two parameters are also
important for the global quantitative aspects of a large tree.
Finally, the fourth parameter corresponds to the typical number
of binary nodes in a large tree. This study becomes natural
once we have seen the typical value of the number of internal nodes
of a large tree.

\subsubsection*{Quantitative analysis of the number of iteration steps}

A fundamental parameter characterizing the increasing Schröder trees
is their number of internal nodes.
This parameter is interesting in itself, but furthermore
it corresponds to the maximal label value in the tree, and thus
it is also the number of steps of the building process.

To study both the number of internal nodes and the number of leaves,
we enrich the specification~\eqref{eq:specT2} with an additional parameter~$\U$
marking the internal nodes. 
\begin{align}\label{eq:specbiv_T2}
&\T = \Z \cup \left(\U \times \Theta_\Z \T \times \SeqUn(\Z) \right); \nonumber \\
&T(z,u) = z + \frac{u\; z^2}{1-z} \; \partial_z T(z,u).
\end{align}
The operator $\Theta_\Z$ consists in pointing an element marked by $\Z$.
The partial differentiation according to $z$
is written as $\partial_z \cdot$.
With the notation
$T(z,u) = \sum_{n\geq 1} t_n(u) z^n$,
the equation~\eqref{eq:specbiv_T2} gives two recurrences
satisfied either by $(t_n(u))$, 
or by $(t_{n,k})$, where $t_{n,k}$ is the number of trees with $n$
leaves and $k$ internal nodes (that are increasingly labeled):
\begin{align} \label{eq:rec2}
&\left\{ \begin{array}{l}
t_1(u) = 1, \qquad t_2(u) = u,  \\
\text{and if } n> 2,\\
t_{n}(u) = (1+(n-1)u) t_{n-1}(u);
\end{array}
\right. \nonumber \\
&\left\{ \begin{array}{l l}
t_{n,k} = t_{n-1,k} + (n-1) \; t_{n-1,k-1} & \text{ if } 0 < k < n\\
t_{1,0} = 1, \qquad t_{n,1} = 1 &\text{ if } n > 1 \text{ and} \\
t_{i,j} = 0 & \text{ otherwise}.
\end{array}
\right. 
\end{align}
\begin{wrapfigure}[14]{r}{5.2cm}
	\begin{center}
		\begin{tabular}{ l l l l l l l l l l l }
			$1$ ,& & & & & & & & & & \\
			$0$ ,&  $1$, &  & & & & & & &  &\\
			$0$ ,&  $1$, & $2$, &   & & & & & &  &\\
			$0$ ,&  $1$, & $5$, & $6$,  &   & & & & &  &\\
			$0$ ,&  $1$, & $9$, & $26$,  &  $24$, &    & & & &  &\\
			$0$ ,&  $1$, & $14$, & $71$,  &  $154$, &  $120$,  &   & & &  &\\
		\end{tabular}
		\caption{\label{tab:interne_leaf1}
			Distribution of $t_{n,k}$ for size-$n$ trees, $n \in \{1, 2, \dots, 6\}$, 
			of the number of internal nodes $k \in \{0,1,\dots, n-1\}$}
	\end{center}
\end{wrapfigure}

Remark that the extremal conditions are trivially obtained through our
construction in particular the sequence $(t_{n,n-1})_n$ is enumerating
increasing binary trees.
Once again, these efficient recurrences are obtained thanks to the
extraction of $[z^n] (1-z)\cdot T(z,u)$.
In Figure~\ref{tab:interne_leaf1}, for the tree size-$n$ from $1$ to $6$,
we present the distribution of the number of trees according to their number $k$ of
internal nodes.

The \emph{Borel transform}, denoted as $\B\cdot$, translates
an ordinary generating series into 
its analog exponential generating series. For example, we obtain
$\B T(z) = \sum_{n\geq 1} t_n \; \frac{z^n}{n!}$.
In particular, due to the growth of the coefficients $(t_n)_n$ we directly
observe that $\B T(z)$ is analytic around $0$ (with radius of convergence $1$).

\begin{proposition}\label{prop:bivt} 
	The Borel transform on $T(z,u)$ relatively to the variable $z$ gives
	\[
	\B T(z,u)= \sum_{n\geq 1} \sum_{k=0}^{n-1} t_{n,k} \; u^k \; \frac{z^n}{n!}
	=\frac{u(1-zu)^{-\frac{1}{u}} -u +z}{1+u}.
	\]
\end{proposition}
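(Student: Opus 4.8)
The plan is to derive an ordinary differential equation for the Borel transform $\B T(z,u)$ in the variable $z$, solve it, and match the result against the closed form in the statement. Write $B(z,u) := \B T(z,u) = \sum_{n\geq 1} t_n(u)\, z^n/n!$, where $(t_n(u))_n$ satisfies the leftmost recurrence in~\eqref{eq:rec2}, namely $t_1(u)=1$, $t_2(u)=u$, and $t_n(u) = (1+(n-1)u)\,t_{n-1}(u)$ for $n>2$. The first step is to translate this recurrence into a differential equation for $B$. Multiplying the recurrence by $z^{n-1}/(n-1)!$ and summing should produce a first-order linear ODE in $z$; the factor $(n-1)$ arising from the $(n-1)u$ term becomes a $\partial_z$ acting on $B$, while the shift from $t_{n-1}$ to $t_n$ is handled by the Borel convention $z^n/n!$. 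Concretely I expect to obtain something of the shape $\partial_z B(z,u) = B(z,u)/z + u\,\partial_z\bigl(z B(z,u)\bigr)$ or an equivalent form, with the precise coefficients fixed by the $n>2$ range of validity and corrected by the low-order terms $t_1,t_2$.

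Once the ODE is in hand, the second step is to solve it. Since it is first-order and linear in $z$ (treating $u$ as a parameter), the integrating-factor method applies: I would isolate $\partial_z B$, identify the homogeneous factor $(1-zu)^{-1/u}$—which is exactly the singular factor appearing in the claimed answer and is the natural solution of $y' = y/(1-zu)$—and then solve for the particular solution matching the inhomogeneous affine term. The appearance of $(1-zu)^{-1/u}$ strongly suggests that the homogeneous part of the ODE has the form $\partial_z B \sim B/(1-zu)$, so I would aim to massage the recurrence-derived ODE into that shape. The boundary/initial conditions $B(0,u)=0$ and $\partial_z B(0,u)=t_1(u)=1$ (together with the coefficient of $z^2$ giving $t_2(u)/2! = u/2$) then pin down the two constants of integration uniquely.

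The main obstacle I anticipate is bookkeeping at the boundary of the recurrence: the relation $t_n(u)=(1+(n-1)u)t_{n-1}(u)$ is only valid for $n>2$, and the irregular initial values $t_1(u)=1$, $t_2(u)=u$ (note $t_2 = u$ rather than the value $(1+u)\cdot t_1 = 1+u$ that the recurrence would predict) will inject correction terms into the ODE. These corrections are precisely what produce the $-u+z$ numerator and the $1+u$ denominator in the stated formula rather than a bare power. I would track these low-order discrepancies carefully, most cleanly by writing $t_n(u) = (1+u)\cdot\tilde t_n(u)$ for an auxiliary sequence obeying the recurrence for all $n\geq 1$, solving for $\tilde t_n$ via the clean homogeneous ODE, and then re-adding the finitely many boundary corrections.

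As an alternative, and as an independent check, I would solve the recurrence in closed form directly: iterating $t_n(u) = (1+(n-1)u)\,t_{n-1}(u)$ from $n=3$ gives $t_n(u) = u\prod_{j=2}^{n-1}(1+ju)$ for $n\geq 2$, a rising-factorial-type product. Recognizing $\prod_{j=2}^{n-1}(1+ju) = u^{n-2}\prod_{j=2}^{n-1}(j+1/u)$ as a ratio of Gamma factors (a Pochhammer symbol in $1/u$), I can sum $\sum_n t_n(u)\,z^n/n!$ against the generalized binomial series for $(1-zu)^{-1/u}$, whose coefficients are $\binom{-1/u}{n}(-u)^n = u^n\binom{1/u+n-1}{n}$. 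Matching these coefficients term by term verifies the closed form and confirms that the stray $-u+z$ and the normalization $1/(1+u)$ account exactly for the indices $n=0,1$ absent from the combinatorial sum. Either route reduces the proof to a routine generating-function identity once the ODE or the product formula is established.
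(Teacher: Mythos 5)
Your proposal is correct, and its primary route lands exactly where the paper does: both reduce the problem to the first-order linear ODE $(1-zu)\,\partial_z \B T(z,u)=\B T(z,u)+1-z$ and solve it by integrating factor with $\B T(0,u)=0$. The only difference is how you reach that ODE — you translate the recurrence $t_n(u)=(1+(n-1)u)t_{n-1}(u)$ term by term (multiplying by $z^{n-1}/(n-1)!$ and summing, with the $n\le 2$ corrections injecting the $-z+1$ inhomogeneity), whereas the paper applies the Borel-transform identities $\B(zf)=\int_0^z\B f$ and $\B(f')=(\B f)'+z(\B f)''$ directly to the functional equation~\eqref{eq:specbiv_T2} and then differentiates once to clear the integral; since the recurrence is itself extracted from that functional equation, these are two presentations of the same computation. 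Your alternative route is genuinely different and arguably the cleanest: using the closed product $t_n(u)=u\prod_{\ell=2}^{n-1}(1+\ell u)$ (Corollary~\ref{cor:row_extract}) and matching against $(1-zu)^{-1/u}=\sum_{n\ge 0}\frac{z^n}{n!}\prod_{j=0}^{n-1}(1+ju)$ gives $t_n(u)=\frac{u}{1+u}\,[z^n/n!]\,(1-zu)^{-1/u}$ for $n\ge 2$, and subtracting the $n=0,1$ terms produces precisely the $-u+z$ numerator; this bypasses the ODE entirely and doubles as an independent check. Two small nits: a first-order ODE has a single constant of integration, not two (the extra conditions you list are consistency checks, since $z=0$ in the ODE forces $\partial_z B(0,u)=B(0,u)+1$); and the normalization in your substitution $t_n=(1+u)\tilde t_n$ should be $t_n=\frac{u}{1+u}\prod_{j=1}^{n-1}(1+ju)$ for $n\ge 2$ — but neither affects the validity of the argument.
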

Here we just present the key-ideas of the proof, but
details are given in Appendix~\ref{app:high_inc}.
\begin{proof}[Key-ideas]
	Applying the Borel transform on equation~\eqref{eq:specbiv_T2}
	and then classical properties of the Borel transform for the function $z\cdot f(z)$
	and for the derivative $f'(z)$ yields the result.
\end{proof}

Let us come back to the polynomial $t_n(u)=\sum_{k=0}^{n-1} t_{n,k} \; u^k$.
It corresponds almost to the sequence \href{https://oeis.org/A145324}{\texttt{OEIS A145324}}
related to Stirling numbers.
\begin{corollary}\label{cor:row_extract}
	Let $n\ge 2$. The distribution of the number of internal nodes in increasing
	Schröder trees of size $n$ is
	\[
	t_n(u)=\sum_{k=0}^{n-1} t_{n,k} \; u^k = u \; \prod_{\ell=2}^{n-1} (1+\ell u).
	\]
\end{corollary}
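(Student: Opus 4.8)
The plan is to prove the product formula for $t_n(u)$ directly from the recurrence~\eqref{eq:rec2}, by induction on $n$. The base case is handled by the initial conditions: $t_2(u)=u$, which matches the claimed formula since the product $\prod_{\ell=2}^{1}(1+\ell u)$ is empty and equals $1$, leaving $t_2(u)=u$.

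For the inductive step, I would use the first (univariate) recurrence in~\eqref{eq:rec2}, namely $t_n(u)=(1+(n-1)u)\,t_{n-1}(u)$ valid for $n>2$. Assuming the formula holds for $n-1\ge 2$, that is $t_{n-1}(u)=u\prod_{\ell=2}^{n-2}(1+\ell u)$, I simply multiply by the factor $(1+(n-1)u)$ to obtain
\[
t_n(u)=(1+(n-1)u)\,u\prod_{\ell=2}^{n-2}(1+\ell u)=u\prod_{\ell=2}^{n-1}(1+\ell u),
\]
which is exactly the claimed identity. This is the entire argument, and it is essentially immediate once the recurrence $t_n(u)=(1+(n-1)u)t_{n-1}(u)$ is taken as known.

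There is no serious obstacle here, since the hard work was already done in establishing the recurrence~\eqref{eq:rec2} from the functional equation~\eqref{eq:specbiv_T2}. The only point requiring a moment of care is the transition $n=2\to n=3$: the recurrence $t_n(u)=(1+(n-1)u)t_{n-1}(u)$ is stated for $n>2$, so I would verify the case $n=3$ explicitly (giving $t_3(u)=(1+2u)\,u=u+2u^2$, consistent with the row $0,1,2$ of Figure~\ref{tab:interne_leaf1}) to confirm the induction launches correctly from the base $t_2(u)=u$. Alternatively, one could derive the formula directly by unfolding the telescoping product $t_n(u)=t_2(u)\prod_{\ell=3}^{n}(1+(\ell-1)u)$ and reindexing, which avoids an explicit induction altogether and makes the empty-product convention at $n=2$ transparent.

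As a consistency check I would also match a few coefficients against Figure~\ref{tab:interne_leaf1}: expanding $u\prod_{\ell=2}^{n-1}(1+\ell u)$ for small $n$ reproduces the rows of the triangle, and the leading coefficient $[u^{n-1}]t_n(u)=\prod_{\ell=2}^{n-1}\ell=(n-1)!/1$ recovers the count $(n-1)!$ of increasing binary trees flagged in the text as $(t_{n,n-1})_n$, confirming the formula is correctly normalized.
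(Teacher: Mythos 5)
Your proof is correct and matches the paper's: the paper likewise obtains the product formula by a direct rewriting (telescoping) of the first recurrence in~\eqref{eq:rec2}, $t_n(u)=(1+(n-1)u)\,t_{n-1}(u)$ with $t_2(u)=u$. Your explicit attention to the base case and the $n=2\to 3$ transition, plus the coefficient checks against Figure~\ref{tab:interne_leaf1}, are sound but add nothing beyond the paper's one-line argument.
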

The proof relies on a direct rewriting of the first recurrence
in equation~\eqref{eq:rec2}.
The generating function corresponds to the $n$-th row in the triangle
presented in Figure~\ref{tab:interne_leaf1}.
Although the sequence $(t_n(u))$
is stored in \texttt{OEIS} we exhibit here another link
with a very classical triangle.
By reading each row of the triangle from right to left,
we obtain a shifted version of the triangles 
\href{https://oeis.org/A136124}{\texttt{OEIS A136124,A143491}}.
It corresponds almost 
to the generating function of Stirling Cycle numbers~\cite[p. 735]{FS09}:
$SC_n(u)=\prod\limits_{i=1}^{n-1}(u+i)$. The associated sequence enumerates
size-$n$ permutations that decompose into $k$ cycles, defined as Stirling numbers
of the first kind. 
More formally we prove:
\begin{proposition}
	Defining $\hat{t}_n(u)=\sum_{k=1}^n t_{n,k} \; u^{n-k}$, we obtain
	$\displaystyle{\hat{t}_n(u)  = \frac{u}{1+u}SC_n(u)}$.
\end{proposition}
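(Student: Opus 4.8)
The plan is to derive $\hat{t}_n$ directly from the explicit row polynomial already established in Corollary~\ref{cor:row_extract}, namely $t_n(u) = u \prod_{\ell=2}^{n-1}(1+\ell u)$, and to recognize $\hat{t}_n$ as its reciprocal (order-reversing) transform. First I would note that since $t_{n,0}=0$ for $n\ge 2$ (the leftmost column of the triangle in Figure~\ref{tab:interne_leaf1} vanishes) and $t_{n,k}=0$ for $k\ge n$, the defining sum can be extended to start at $k=0$ without changing its value:
\[
\hat{t}_n(u) = \sum_{k=1}^{n} t_{n,k}\, u^{n-k} = \sum_{k=0}^{n-1} t_{n,k}\, u^{n-k} = u^n \sum_{k=0}^{n-1} t_{n,k}\, u^{-k} = u^n\, t_n(1/u).
\]

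Second, I would substitute the product formula for $t_n$ and evaluate at $1/u$. Each factor $1+\ell u$ becomes $(u+\ell)/u$, and collecting the $n-2$ factors together with the leading $1/u$ gives
\[
t_n(1/u) = \frac{1}{u} \prod_{\ell=2}^{n-1} \frac{u+\ell}{u} = \frac{1}{u^{\,n-1}} \prod_{\ell=2}^{n-1}(u+\ell).
\]
Multiplying by $u^n$ then yields $\hat{t}_n(u) = u \prod_{\ell=2}^{n-1}(u+\ell)$.

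Finally I would match this against the target. Writing $SC_n(u) = \prod_{i=1}^{n-1}(u+i) = (u+1)\prod_{i=2}^{n-1}(u+i)$ and dividing by $1+u$ isolates exactly $\prod_{i=2}^{n-1}(u+i)$, so that $\tfrac{u}{1+u}SC_n(u) = u\prod_{i=2}^{n-1}(u+i) = \hat{t}_n(u)$, which is the claim.

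There is no genuine obstacle here; the only point requiring care is the bookkeeping of exponents and index ranges. In particular one must check that the vanishing of $t_{n,0}$ is what allows the reversal sum to be extended down to $k=0$, and that the leading $1/u$ supplies the extra power of $u$ making $\hat{t}_n$ come out with a clean factor $u$ rather than a constant term. A quick sanity check on a small row (for $n=4$ one gets $u^3+5u^2+6u = u(u+2)(u+3)$, consistent with row $4$ of the triangle) confirms the normalization.
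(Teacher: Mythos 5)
Your proof is correct and matches the route the paper intends: the paper gives no explicit argument for this proposition, but it places it immediately after Corollary~\ref{cor:row_extract}, and the intended justification is exactly the polynomial reversal $\hat{t}_n(u)=u^n t_n(1/u)$ applied to the product formula $t_n(u)=u\prod_{\ell=2}^{n-1}(1+\ell u)$, which is what you carry out. Your bookkeeping of the boundary terms ($t_{n,0}=0$ for $n\ge 2$ and $t_{n,n}=0$) and the exponent of $u$ is accurate, and the $n=4$ check confirms it.
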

Let $\X_n$ be the random variable that maps increasing Schröder trees of size $n$
to their numbers of internal nodes. We want to establish a limit law for 
the distribution $\left(\IP_{\T_n}(\X_n=k)\right)_k$.
But let us first compute its mean and its standard deviation so that
we will then study the convergence of the \emph{normalized}
random variable $\X_n^\star=\frac{\X_n -\IE(\X_n)}{\sqrt{\IV(\X_n)}}$.
We follow here the classical approach presented, for example, in~\cite[p. 157]{FS09}.
Since we consider the uniform distribution among trees of a given size $n$,
we obviously get $\IP_{\T_n}(\X_n=k)=\frac{t_{n,k}}{t_{n}}$.

\begin{proposition}\label{prop:var}
	Let $n\ge 2$, the mean value of $\X_n$ is equal to
	\[
	\IE_{\T_n}(\X_n) =  n-H_n+\frac{1}{2} = n- \ln n-\gamma+\frac{1}{2}+ O\left(\frac{1}{n}\right),
	\]
	with $H_n$ the $n$-th harmonic number and $\gamma$ the Euler constant ($\gamma \approx 0.57721\dots$).
	Furthermore,
	\[
	\IV_{\T_n}[\X_n] = \ln n + \gamma -\frac{\pi^2}{6}-\frac{5}{4} + O\left(\frac{\log n}{n}\right).
	\]
\end{proposition}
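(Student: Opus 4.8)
The plan is to extract the moments directly from the explicit product formula for the distribution polynomial established in Corollary~\ref{cor:row_extract}. The key observation is that $t_n(u) = u\prod_{\ell=2}^{n-1}(1+\ell u)$ factors completely, so the probability generating function $\IP_{\T_n}(\X_n = k) = t_{n,k}/t_n$ corresponds to a sum of \emph{independent} Bernoulli-type contributions. More precisely, writing $p_n(u) = t_n(u)/t_n(1)$ for the normalized pgf, the factored form shows that $\X_n$ is distributed as a sum $1 + \sum_{\ell=2}^{n-1} B_\ell$, where each $B_\ell$ is an independent Bernoulli variable with $\IP(B_\ell = 1) = \frac{\ell}{1+\ell}$ and $\IP(B_\ell = 0) = \frac{1}{1+\ell}$ (the constant leading factor $u$ accounts for the deterministic $+1$, since every tree of size $n\ge 2$ has at least one internal node, namely the root).

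Once this independence structure is recognized, the computation reduces to summing the means and variances of the individual Bernoullis. First I would compute $\IE(B_\ell) = \frac{\ell}{1+\ell} = 1 - \frac{1}{1+\ell}$, so that
\[
\IE_{\T_n}(\X_n) = 1 + \sum_{\ell=2}^{n-1}\left(1 - \frac{1}{1+\ell}\right) = (n-1) - \sum_{\ell=2}^{n-1}\frac{1}{1+\ell}.
\]
The remaining sum telescopes into harmonic numbers: $\sum_{\ell=2}^{n-1}\frac{1}{\ell+1} = H_n - 1 - \frac{1}{2}$, which after simplification yields exactly $\IE_{\T_n}(\X_n) = n - H_n + \frac{1}{2}$. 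The asymptotic form then follows immediately from the classical expansion $H_n = \ln n + \gamma + \frac{1}{2n} + O(n^{-2})$.

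For the variance, I would use that $\IV(B_\ell) = \frac{\ell}{(1+\ell)^2}$, so by independence
\[
\IV_{\T_n}[\X_n] = \sum_{\ell=2}^{n-1}\frac{\ell}{(1+\ell)^2} = \sum_{\ell=2}^{n-1}\left(\frac{1}{1+\ell} - \frac{1}{(1+\ell)^2}\right).
\]
The first sum again contributes a harmonic number giving the leading $\ln n + \gamma$ term, while the second sum converges to a tail of $\sum_{m\ge 1} m^{-2} = \frac{\pi^2}{6}$, contributing the $-\frac{\pi^2}{6}$ term together with the correction constants that assemble into $-\frac{5}{4}$. Matching the precise constant will require carefully tracking the finitely many omitted initial terms in both sums.

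\textbf{The main obstacle} I anticipate is not the asymptotic analysis itself but the bookkeeping of the constant terms. The exact constant $-\frac{5}{4}$ in the variance arises from the interplay between the shifted harmonic sum, the truncated Basel sum $\frac{\pi^2}{6} - \sum_{\ell=2}^{n-1}\frac{1}{(1+\ell)^2}$, and the initial-index corrections coming from the product starting at $\ell=2$ rather than $\ell=1$. Getting every $+\frac{1}{2}$ and $-\frac{1}{2}$ right, and confirming that the error term is genuinely $O(\log n / n)$ rather than $O(1/n)$ (which would require that the dominant error comes from the harmonic expansion in the variance's leading term), is where I would be most careful. An alternative, more mechanical route would bypass the Bernoulli interpretation and compute $\IE(\X_n) = p_n'(1)$ and $\IV(\X_n) = p_n''(1) + p_n'(1) - p_n'(1)^2$ by logarithmic differentiation of the product, which makes the harmonic sums appear automatically but obscures the probabilistic meaning; I would favor the independent-Bernoulli viewpoint since it also prepares the ground for the central limit theorem for $\X_n^\star$ via a Lyapunov or Lindeberg condition.
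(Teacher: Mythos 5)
Your route is genuinely different from the paper's. The paper does not use the product formula of Corollary~\ref{cor:row_extract} at this point: it computes $[z^n]\partial_u T(z,u)\mid_{u=1}$ and $[z^n]\partial^2_u T(z,u)\mid_{u=1}$ from the bivariate equation, reads the mean off the harmonic-number generating function $\frac{1}{1-z}\ln\frac{1}{1-z}$, and obtains the second moment through the convolution $\sum_{k=1}^{n-1}\frac{H_k}{n-k}$, whose asymptotics require the expansion of $[z^n]\frac{1}{1-z}\ln^2\frac{1}{1-z}$. Your decomposition $\X_n \stackrel{d}{=} 1+\sum_{\ell=2}^{n-1}B_\ell$ with independent $B_\ell$, $\IP(B_\ell=1)=\frac{\ell}{\ell+1}$, is correct (the normalized pgf is $u\prod_{\ell=2}^{n-1}\frac{1+\ell u}{1+\ell}$), turns the mean into a two-line telescoping, gives the variance as an \emph{exact} finite sum instead of an asymptotic convolution, and — as you observe — delivers Theorem~\ref{theo:limit_law_internal} essentially for free by Lindeberg--Lyapunov, where the paper runs a separate characteristic-function argument. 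It also yields the sharper error term $O(1/n)$.

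There is, however, one concrete problem: the bookkeeping you deferred does \emph{not} assemble into $-\frac{5}{4}$. Carrying out your own plan,
\[
\IV_{\T_n}(\X_n)=\sum_{\ell=2}^{n-1}\frac{\ell}{(1+\ell)^2}
=\sum_{m=3}^{n}\Bigl(\frac{1}{m}-\frac{1}{m^2}\Bigr)
=\Bigl(H_n-\tfrac{3}{2}\Bigr)-\Bigl(\tfrac{\pi^2}{6}-\tfrac{5}{4}+O(\tfrac1n)\Bigr)
=\ln n+\gamma-\frac{\pi^2}{6}-\frac{1}{4}+O\Bigl(\frac{1}{n}\Bigr),
\]
since $-\frac{3}{2}+\frac{5}{4}=-\frac{1}{4}$. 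You can check the exact sum against Figure~\ref{tab:interne_leaf1}: for $n=5$ the distribution $(1,9,26,24)/60$ has variance $\frac{2051}{3600}=\frac{2}{9}+\frac{3}{16}+\frac{4}{25}$, exactly your Bernoulli sum. Redoing the paper's own convolution computation ($\IV=\IE(\X_n^2)-\IE(\X_n)^2$ with $\IE(\X_n^2)=n(n-1)-2n(H_n-1)+\sum_{k<n}\frac{H_k}{n-k}$) also lands on $-\frac{1}{4}$, and numerically at $n=1000$ the exact variance is $\approx 5.5915$ against $\ln n+\gamma-\frac{\pi^2}{6}-\frac14\approx 5.5900$. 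So the constant produced by your (correct) method is $-\frac{1}{4}$; do not force it to match the $-\frac{5}{4}$ in the statement, which appears to be an arithmetic slip in the proposition itself. The mean computation and everything else in your proposal is fine.
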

Recall that the ordinary generating function for the Harmonic numbers sequence
is  $H(z)=\frac{1}{1-z}\; \ln \frac{1}{1-z}$ (see e.g. \cite[p.~388]{FS09}), then the result
is proved by a direct computation. The proof is presented in Appendix~\ref{app:high_inc}.

This proposition allows us to exhibit the limit law of the distribution $(X^\star_n)$
and proves then that the sequence $(\X_n)$ converges in distribution to a Gaussian law.
%
%
%
\begin{theorem}\label{theo:limit_law_internal}
	Let $\X_n$ be the random variable describing the distribution of
	the number of internal nodes in increasing Schröder trees of size $n$, or equivalently
	the number of building steps to get a size-$n$ tree, we have
	$\displaystyle{\frac{\X_n -\IE_{\T_n}(\X_n)}{\sqrt{\IV_{\T_n}(\X_n)}} \xrightarrow[]{~d~} \N(0,1).}$
\end{theorem}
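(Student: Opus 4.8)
The plan is to establish a central limit theorem for $\X_n$ by the method of moments (more precisely, by showing convergence of the characteristic function or the normalized probability generating function), leveraging the explicit product formula for $t_n(u)$ obtained in Corollary~\ref{cor:row_extract}. The crucial observation is that the probability generating function of $\X_n$ factors completely: since $t_n(u) = u\prod_{\ell=2}^{n-1}(1+\ell u)$ and $t_n = t_n(1)$, the generating function of the normalized distribution is a product of independent-looking factors. This is the hallmark of a sum of independent Bernoulli random variables, so the most transparent route is to recognize $\X_n$ as such a sum and invoke a Lindeberg-type condition.

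Concretely, I would first write $\X_n \overset{d}{=} 1 + \sum_{\ell=2}^{n-1} B_\ell$, where the $B_\ell$ are independent Bernoulli variables with $\IP(B_\ell = 1) = \frac{\ell}{1+\ell}$ and $\IP(B_\ell = 0) = \frac{1}{1+\ell}$. This decomposition follows immediately from the factorization $t_n(u)/t_n = u\prod_{\ell=2}^{n-1}\frac{1+\ell u}{1+\ell}$, since each factor $\frac{1+\ell u}{1+\ell}$ is exactly the probability generating function of such a Bernoulli variable evaluated at $u$, and the leading factor $u$ contributes the deterministic $+1$. With this representation in hand, the mean and variance computed in Proposition~\ref{prop:var} are recovered as $\sum_\ell \IE(B_\ell)$ and $\sum_\ell \IV(B_\ell)$, which I can invoke directly.

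The core of the argument is then a verification of Lyapunov's (or Lindeberg's) condition for the triangular array $(B_\ell)$. Since each $B_\ell$ is bounded in $[0,1]$, the individual centered terms satisfy $|B_\ell - \IE(B_\ell)| \le 1$, so the third absolute moment of each centered term is bounded by its variance. The Lyapunov ratio with exponent $3$ is therefore at most $\IV(\X_n)^{-3/2}\sum_\ell \IV(B_\ell) = \IV(\X_n)^{-1/2}$, which tends to $0$ because Proposition~\ref{prop:var} gives $\IV(\X_n) \sim \ln n \to \infty$. This immediately yields the Lyapunov condition and hence convergence of $\X_n^\star$ to $\N(0,1)$.

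The main subtlety is not the CLT machinery itself but confirming that the Bernoulli decomposition is legitimate, i.e. that $t_n(u)/t_n$ really is a valid probability generating function factoring as claimed; this is where Corollary~\ref{cor:row_extract} does the essential work, and I would state the factorization carefully before passing to the probabilistic interpretation. An alternative, purely analytic route avoiding any probabilistic language would be to apply the quasi-powers framework of Hwang directly to the sequence of polynomials $t_n(u)$, for which the explicit product form makes the required uniform expansion of $\log t_n(e^s)/t_n$ near $s=0$ straightforward; I expect the divergence $\IV(\X_n)\to\infty$ to be the only nontrivial ingredient either way, and both approaches reduce to the same growth estimate furnished by the preceding proposition.
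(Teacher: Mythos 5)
Your proof is correct, but it follows a genuinely different route from the paper's. The paper adapts Flajolet and Sedgewick's analytic argument for Stirling cycle numbers: it writes the (mirrored) probability generating function as a ratio of Gamma functions, $p_n(u)=\frac{2\Gamma(u+n)}{(u+1)\Gamma(u)\Gamma(n+1)}$, applies Stirling's formula to obtain the quasi-power estimate $p_n(u)=\frac{n^{u-1}}{\Gamma(u)}\bigl(1+O(1/n)\bigr)$ uniformly near $u=1$, and then expands the characteristic function of the standardized variable to show $\log\phi_{\X_n^\star}(t)\to -t^2/2$. You instead exploit the same product formula $t_n(u)=u\prod_{\ell=2}^{n-1}(1+\ell u)$ probabilistically, writing $\X_n\overset{d}{=}1+\sum_{\ell=2}^{n-1}B_\ell$ with independent $B_\ell\sim\Bern\bigl(\tfrac{\ell}{\ell+1}\bigr)$ and checking Lyapunov's condition; the bound $\IE|B_\ell-\IE B_\ell|^3\le\IV(B_\ell)$ gives a Lyapunov ratio of $\IV(\X_n)^{-1/2}\to 0$, which is exactly the growth estimate $\IV(\X_n)\sim\ln n$ from Proposition~\ref{prop:var}. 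Both proofs rest on the same two ingredients (the factorization of Corollary~\ref{cor:row_extract} and the divergence of the variance), but yours is more elementary and self-contained, requiring no Gamma-function asymptotics and no uniformity verification for the characteristic-function expansion; the price is that the Bernoulli decomposition only works because every linear factor $\frac{1+\ell u}{1+\ell}$ has nonnegative coefficients summing to one, a fact you rightly flag as the one point needing explicit justification. The paper's quasi-power route is heavier here but transfers to settings where no such exact factorization exists, and (via Hwang's theorem) would also yield a Berry--Esseen-type speed of convergence of order $O(1/\sqrt{\log n})$.
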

The proof is obtained via an adaptation of Flajolet and Sedgewick's approach
for the limit Gaussian law of Stirling Cycle numbers~\cite[p.~644]{FS09}:
see Appendix~\ref{app:high_inc}.
Observing the mean value $\IE_{\T_n}(\X_n)$ we remark that only the second order in the asymptotic 
behavior permits to conclude that some internal nodes are not binary.

\newpage

\subsubsection*{Quantitative characteristics of the root node}


\begin{wrapfigure}[11]{r}{5.8cm}
	\vspace*{-4mm}
	\begin{center}
		\begin{tabular}{ l l l l l l l l  }
			$1$ ,&  $0$, & & & & &  \\
			$0$ ,&  $0$, & $1$, & & & & \\
			$0$ ,&  $0$, & $2$, & $1$,  & & &\\
			$0$ ,&  $0$, & $8$, & $3$,  &  $1$, & & \\
			$0$ ,&  $0$, & $40$, & $15$,  &  $4$, &  $1$,  & \\
			$0$ ,&  $0$, & $240$, & $90$,  &  $24$, &  $5$,  & $1$, \\
		\end{tabular}
		\caption{Distribution of $t_{n,k}$ for size-$n$ trees, $n \in \{1,2,\dots, 6\}$, of the root arity
			$k \in \{0,1,\dots,n\}$}
	\end{center}
\end{wrapfigure}
The next parameters we are interested in are related
to the root of the increasing Schröder trees. Concerning this particular
node, we want to understand first its typical arity, and then
the number of leaves attached to it in a large tree.

To avoid the description of several notations in the paper,
we have chosen to reuse the previous notations for this new sequence.
Thus here the variable $\U$
marks the arity of the root. The specification is direct: either 
the root-leaf is modified in the evolution process, or it is not the root that is substituted.
\begin{align*}
\T = &\Z \cup \left(\U \times \Theta_\Z (\Z) \times \SeqUn(\U \times \Z) \right)\\
&\cup \left(\Theta_\Z (\T \setminus \Z) \times \SeqUn(\Z) \right).
\end{align*}
We directly obtain the translation
\[
T(z,u) = z + \frac{u^2\; z^2}{1-uz} + \frac{z^2}{1-z} \; \partial_z \left(T(z,u) - z\right).
\]
%
%
%
%
In the same way as before we prove 
\begin{align} \label{eq:rec3}
&\left\{ \begin{array}{l}
t_1(u) = 1, \qquad t_2(u) = u^2,  \\
\text{and if } n> 2,\\
t_{n}(u) = u^{n-1}(u-1) + n\; t_{n-1}(u);
\end{array}
\right. \nonumber\\
&\left\{ \begin{array}{l c l}
t_{n,k} = n\; t_{n-1,k} & & \text{ if } 1 < k < n-1\\
t_{1,0} = 1, \quad t_{2,2} = 1 & & \\
t_{n,n-1} = n-1 \quad t_{n,n}=1 & &\text{ if } n>2 \text{ and} \\
t_{i,j} = 0 & & \text{ otherwise}.
\end{array}
\right.
\end{align}

\noindent These sequences are related to \href{https://oeis.org/A094112}{\texttt{OEIS A094112,A092582}},
that define properties on permutations (either some avoiding pattern,
or with some fixed size initial run).

\begin{corollary}
	For $n\geq 2$ and $2 \leq k \leq n-1$, we get 
	$\displaystyle{t_{n,k}=n!  \frac{k}{(k+1)!}}$.
\end{corollary}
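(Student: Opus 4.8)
The plan is to read off the closed form directly from the second recurrence in~\eqref{eq:rec3}, which for a fixed arity $k$ is essentially a pure multiplicative rule in $n$. First I would fix $k \geq 2$ and treat $n$ as the induction variable. Within the stated range $2 \leq k \leq n-1$ the natural base case sits at the upper boundary $k = n-1$: substituting $n = k+1$ into the extremal value $t_{n,n-1} = n-1$ gives $t_{k+1,k} = k$, which already matches the claimed expression, since $(k+1)! \cdot \frac{k}{(k+1)!} = k$.

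Next I would exploit the observation that for $n \geq k+2$ the pair $(n,k)$ lies in the regime $1 < k < n-1$, so the first line of the recurrence applies verbatim, namely $t_{n,k} = n\, t_{n-1,k}$. Iterating this identity from $n$ down to the base index $k+1$ telescopes into $t_{n,k} = \bigl(\prod_{j=k+2}^{n} j\bigr)\, t_{k+1,k} = \frac{n!}{(k+1)!} \cdot k$, which is exactly $n!\,\frac{k}{(k+1)!}$. An equivalent, perhaps tidier, presentation is a single induction on $n$ for fixed $k$, using the multiplicative recurrence as the inductive step and the boundary value as the base; the two amount to the same computation.

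I do not expect a genuine obstacle here: the only care needed is bookkeeping on the index ranges. One must check that the multiplicative rule is invoked precisely for $k+2 \leq n$, while the value at $n = k+1$ is supplied by the boundary condition $t_{n,n-1}$ rather than by the rule, and that the resulting formula remains consistent at the boundary $k = n-1$, where it collapses to $n! \, \frac{n-1}{n!} = n-1$ as required. A quick cross-check against the table in the adjacent figure (for instance $t_{6,3} = 6!\,\frac{3}{4!} = 90$) confirms the normalisation.
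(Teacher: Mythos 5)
Your argument is correct and is exactly the induction the paper has in mind (the paper simply states ``a proof by induction is direct''): anchor at the boundary value $t_{k+1,k}=k$ and telescope the multiplicative rule $t_{n,k}=n\,t_{n-1,k}$ for $1<k<n-1$ to obtain $t_{n,k}=\frac{n!}{(k+1)!}\,k$. The index bookkeeping and the consistency check at $k=n-1$ are handled properly, so nothing is missing.
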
 
A proof by induction is direct.
%
%
%
%
%

\begin{theorem}\label{theo:law_root_arity}
	Let $\X_n$ be the random variable describing the distribution of
	the number of children of the root in increasing Schröder trees of size $n$,
	we have, for $n\geq 2$ and $2 \leq k \leq n-1$,
	\[
	\IP_{\T_n}(\X_n=k) = \frac{2k}{(k+1)!}, \quad \text{and} \quad
	\IP_{\T_n}(\X_n=n) = \frac{2}{n!}.
	\]
\end{theorem}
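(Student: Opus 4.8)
The plan is to derive the probability distribution directly from the enumeration formula already established in the preceding corollary, together with the known total count $t_n = n!/2$ from Section~\ref{sec:enum}. Since we are working under the uniform distribution on size-$n$ trees, we have the identity $\IP_{\T_n}(\X_n = k) = t_{n,k}/t_n$, so the entire statement reduces to computing this ratio in the two relevant regimes.

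For the generic range $2 \leq k \leq n-1$, I would substitute the corollary's formula $t_{n,k} = n!\, \frac{k}{(k+1)!}$ into the numerator and $t_n = n!/2$ into the denominator. The factor $n!$ cancels immediately, leaving
\[
\IP_{\T_n}(\X_n = k) = \frac{n!\, k/(k+1)!}{n!/2} = \frac{2k}{(k+1)!},
\]
which is exactly the claimed expression. For the boundary case $k = n$, the corollary does not apply, so I would instead read off $t_{n,n} = 1$ from the recurrence~\eqref{eq:rec3} (the explicit initial/extremal condition $t_{n,n} = 1$), and again divide by $t_n = n!/2$ to obtain $\IP_{\T_n}(\X_n = n) = 1/(n!/2) = 2/n!$, matching the second claim.

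The only genuine care needed is bookkeeping about which value of $k$ the root arity can take and confirming the enumeration inputs are valid on the stated ranges: the corollary is explicitly restricted to $2 \leq k \leq n-1$, and the case $k=n$ must be handled separately from the extremal data in~\eqref{eq:rec3}. I would also note in passing, for consistency, that $t_{n,n-1} = n-1$ combined with $t_n = n!/2$ recovers $\IP_{\T_n}(\X_n=n-1) = 2(n-1)/n!$, which is the $k=n-1$ instance of the general formula, confirming the two regimes agree at their shared boundary. Because every ingredient is already proved, there is no real obstacle here; the result is essentially an immediate specialization, and the main point is simply to assemble the exact count, the total $n!/2$, and the uniform-measure identity into the stated probabilities.
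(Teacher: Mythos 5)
Your proof is correct and follows the same route the paper intends: the theorem is an immediate consequence of the preceding corollary's formula $t_{n,k}=n!\,\frac{k}{(k+1)!}$ together with $t_n=n!/2$ and the uniform-measure identity $\IP_{\T_n}(\X_n=k)=t_{n,k}/t_n$, with the boundary case $k=n$ read off from the extremal condition $t_{n,n}=1$ in the recurrence. Your consistency check at $k=n-1$ is a nice touch but not needed.
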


The second characteristics for the root node is the number of
leaves that are attached to it. Here the specification and thus
the ordinary differential equation are more involved.
In particular, the operators needed for the specification are not so classical
so we prefer to explain directly the differential equation.
Once again, let $T(z,u) = \sum_{n,k} t_{n,k} \; u^k z^n$
be the bivariate generating with $t_{n,k}$ the number
of size-$n$ increasing Schröder trees with $k$ leaves as
children of the root. Then,
{\small
	\[
	T\left( z,u \right) = z + \frac {u^2\; z^2}{1-uz} + \frac{z^2}{1-z}\; \partial_z T\left( z,u \right)
	+ \frac{ z\left( 1-u \right)}{1-z} \; \partial_u T\left( z,u \right). 
	\]
}
Let us give the details to understand the construction. A tree is either reduced to a leaf or 
a single internal node with some leaves: $z + \frac {u^2\; z^2}{1-uz}$.
Or in the iterative process a leaf attached to the root is selected, then replaced by an
internal node with at least two leaves (that are not anymore attached to the root of the whole tree):
$\frac{z}{1-z} \; \partial_u T\left( z,u \right)$.
Or, during the iterative process, a leaf that is not attached to the root is selected and replaced
by an internal node attached to at least two leaves: $\frac{z^2}{1-z}\; \partial_z T\left( z,u \right)
- \frac{ u\; z}{1-z} \; \partial_u T\left( z,u \right)$. The second term removes the trees built
in the first one where we have selected a leaf attached to the root (and also marked by $z$).

\noindent Again by denoting $T(z,u) = \sum_n t_n(u) z^n$,
we can extract the following recurrence, for all $n\geq 4$,
\begin{align*}
t_n(u) = &(n+u) \; t_{n-1}(u) + u(1-n) \; t_{n-2}(u) \\
&+ (1-u) \; t'_{n-1}(u) + (u^2-u) \; t'_{n-2}(u), 
\end{align*}
with $t_1(u)=1$, $t_2(u)=u^2$ and $t_3(u)=u^3+2u$.

%
\begin{theorem}
	\label{theo:leaves_root}
	Asymptotically, the mean and the variance of the number $\X_n$ of leaves attached to the root are
	$\IE_{\T_n}(\X_n)=\frac{2e}{n} + O\left(\frac{1}{n!}\right)$ and 
	$\IV_{\T_n}(\X_n)=\frac{2e}{n} + O\left(\frac{1}{n^2}\right)$.
\end{theorem}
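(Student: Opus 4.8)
The plan is to work entirely with the univariate polynomials $t_n(u)=\sum_k t_{n,k}\,u^k$ governed by the displayed recurrence, and to read off the mean and variance from their derivatives at $u=1$. Writing $a_n:=t_n'(1)$ and $b_n:=t_n''(1)$, and recalling from Section~\ref{sec:enum} that $t_n(1)=t_n=n!/2$ for $n\ge 2$, the uniform distribution on $\T_n$ gives $\IE_{\T_n}(\X_n)=a_n/t_n=\frac{2a_n}{n!}$, while the factorial-moment identity yields $\IV_{\T_n}(\X_n)=\frac{2b_n}{n!}+\frac{2a_n}{n!}-\left(\frac{2a_n}{n!}\right)^2$. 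Thus the whole theorem reduces to the asymptotics of the two sequences $(a_n)$ and $(b_n)$.

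First I would differentiate the recurrence once and evaluate at $u=1$. The key simplification is that the inhomogeneous part combines as $t_{n-1}+(1-n)t_{n-2}=\tfrac12\bigl[(n-1)!+(1-n)(n-2)!\bigr]=0$, precisely because $t_m=m!/2$; what survives is the homogeneous relation $a_n=n\,a_{n-1}+(2-n)\,a_{n-2}$ for $n\ge 4$. I would then observe the telescoping $a_n-(n-1)a_{n-1}=a_{n-1}-(n-2)a_{n-2}$, so that $a_n-(n-1)a_{n-1}$ is constant, equal to $1$ (from $a_2=2$, $a_3=5$). Dividing $a_n=(n-1)a_{n-1}+1$ by $(n-1)!$ turns it into $\frac{a_n}{(n-1)!}=\frac{a_{n-1}}{(n-2)!}+\frac{1}{(n-1)!}$, whose telescoped sum is $e$ up to the tail $\sum_{j\ge n}1/j!=O(1/n!)$. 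Hence $a_n=e\,(n-1)!+O(1/n)$ and $\IE_{\T_n}(\X_n)=\frac{2a_n}{n!}=\frac{2e}{n}+O(1/n!)$, which is the mean claim.

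For the variance I would repeat the procedure with the second derivative. Differentiating twice and evaluating at $u=1$ gives, after collecting terms, $b_n=(n-1)b_{n-1}+(3-n)b_{n-2}+2a_{n-1}+(4-2n)a_{n-2}$; here the already-established identity $a_{n-1}-(n-2)a_{n-2}=1$ collapses the $a$-terms to the constant $2$, leaving $b_n=(n-1)b_{n-1}+(3-n)b_{n-2}+2$. A second telescoping --- this time $b_n-(n-2)b_{n-1}$ increases by exactly $2$ at each step --- shows $b_n-(n-2)b_{n-1}=2(n-1)$. Dividing $b_n=(n-2)b_{n-1}+2(n-1)$ by $(n-2)!$ and using $\frac{2(k-1)}{(k-2)!}=\frac{2}{(k-3)!}+\frac{2}{(k-2)!}$, the telescoped series sums to $4e$ plus an $O(1/(n-2)!)$ tail, so $b_n=4e\,(n-2)!+O(1)$ and therefore $\frac{2b_n}{n!}=\frac{8e}{n(n-1)}+O(1/n!)=O(1/n^2)$. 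Plugging the two estimates into the factorial-moment identity, the dominant contribution to the variance is the mean term $\frac{2a_n}{n!}=\frac{2e}{n}$, while $\frac{2b_n}{n!}$ and $\left(\frac{2a_n}{n!}\right)^2$ are both $O(1/n^2)$; this gives $\IV_{\T_n}(\X_n)=\frac{2e}{n}+O(1/n^2)$.

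The routine parts are the two differentiations and the coefficient bookkeeping. The main obstacle --- and the reason the constants come out as clean multiples of $e$ --- is spotting that each recurrence admits a one-step telescoping once the inhomogeneous pieces are simplified using $t_m=m!/2$ (and, at the second level, using the first-order identity $a_{n-1}-(n-2)a_{n-2}=1$). Getting these cancellations right, and checking that the neglected factorial tails are genuinely $O(1/n!)$ so that they are absorbed into the stated error terms, is where care is needed; everything else is elementary. It is worth flagging that the leading $\frac{2e}{n}$ in the variance stems from the mean contribution $\frac{2a_n}{n!}$ rather than from the factorial moment $\frac{2b_n}{n!}$, which is one order smaller.
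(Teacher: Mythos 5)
Your proposal is correct: I checked the once- and twice-differentiated recurrences at $u=1$, the telescoping identities $a_n-(n-1)a_{n-1}=1$ and $b_n-(n-2)b_{n-1}=2(n-1)$ (consistent with $t_4(u)=u^4+3u^2+6u+2$, giving $a_4=16$, $b_4=18$), and the resulting estimates $a_n=(n-1)!\,(e+O(1/n!))$ and $b_n=4e\,(n-2)!+O(1)$, which yield exactly the stated mean and variance. This is essentially the route the paper sets up (moment extraction from the displayed recurrence for $t_n(u)$, as it does for the binary-node count); the paper omits the detailed computation, so your write-up in fact supplies the missing argument.
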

Let us remark the second term in the expansion of $\IE_{\T_n}(\X_n)$ is
extremely small in front of the main term. 
%

\subsubsection*{Quantitative analysis of the number of binary nodes}

Here the specification is easier to exhibit, 
and its translation via the symbolic method is direct ($\U$ is marking the binary nodes):
\begin{align*}
&\T = \Z \cup \left(\Theta_{\Z} \T \times \left( \U \times \Z \cup \SeqDeux (\Z) \right) \right);\\
&T(z,u) = z + \left( u\; z^2 + \frac{z^3}{1-z} \right) \partial_z T(z,u).
\end{align*}
Let us again extract the recurrence $t_n(u)$, for all $n\geq 4$:
\[
t_n(u)= (1+u(n-1)) t_{n-1}(u) + (1-u)(n-2) t_{n-2},
\]
with $t_1(u)=1, t_2(u)=u$ and  $t_3(u)=1+2u^2$. 
Note that due to this recurrence,
the probability distribution $p_n(u)=\frac{2}{n!} t_n(u)$
also exhibits a simple recurrence
(cf.~\cite[p. 157]{FS09}).
Thus we easily compute the mean and the second factorial moment
of the number of binary nodes in size-$(n\geq 3)$ trees:
\begin{align*}
&\IE_{\T_n}(\X_n) = \frac{7}{3} + n - 2\cdot \sum_{k=1}^{n} \frac{1}{k} - \frac{1}{n},\quad \text{and} \\
&\IE_{\T_n}(\X_n(\X_n-1)) = \sum _{k=3}^{n} \frac{2k-2}{k} \IE_{\T_{k-1}}(\X_{k-1}) \\
& \hspace*{3cm} - \frac{2k-4}{k (k-1) } \IE_{\T_{k-2}}(\X_{k-2}).
\end{align*}

\begin{theorem}
	Asymptotically, the mean and the variance of the number $\X_n$ of binary internal nodes
	are
	$\displaystyle{\IE_{\T_n}(\X_n) = n -2\ln(n) + \frac{7}{3} -2\gamma + O\left(\frac{1}{n}\right)}$ and 
	$\displaystyle{\IV_{\T_n}(\X_n) = 4\ln(n)+O(1)}$. Furthermore there is a limiting distribution satisfying
	$\displaystyle{\frac{\X_n -\IE_{\T_n}(\X_n)}{\sqrt{\IV_{\T_n}(\X_n)}} \xrightarrow[]{~d~} \N(0,1).}$
\end{theorem}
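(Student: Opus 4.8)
The plan is to treat the three assertions separately: the mean and the variance are read off from the exact moment expressions already displayed just before the statement, while the Gaussian limit is obtained by transferring the whole analysis to an analytic generating function, exactly as was done for the number of internal nodes.

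First, for the mean I would simply insert the classical expansion $\sum_{k=1}^n \frac1k = H_n = \ln n + \gamma + \frac{1}{2n} + O(1/n^2)$ into the closed form $\IE_{\T_n}(\X_n) = \frac73 + n - 2\sum_{k=1}^n \frac1k - \frac1n$. The $\frac1n$ corrections are absorbed into the error term and one obtains $\IE_{\T_n}(\X_n) = n - 2\ln n + \frac73 - 2\gamma + O(1/n)$ immediately.

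For the variance I would write $\IV_{\T_n}(\X_n) = \IE_{\T_n}(\X_n(\X_n-1)) + \IE_{\T_n}(\X_n) - \IE_{\T_n}(\X_n)^2$ and substitute the closed forms. Writing $m_j = \IE_{\T_j}(\X_j)$, the displayed second factorial moment is the single sum $\sum_{k=3}^n \big(\frac{2k-2}{k}m_{k-1} - \frac{2k-4}{k(k-1)}m_{k-2}\big)$; after the partial fraction $\frac{2k-4}{k(k-1)} = \frac4k - \frac2{k-1}$ it becomes an explicit sum in $k$, in $\frac1k$, and in the harmonic numbers $H_{k-1}, H_{k-2}$. The decisive point, and the real obstacle in this step, is that both $\IE_{\T_n}(\X_n(\X_n-1))$ and $\IE_{\T_n}(\X_n)^2$ are of order $n^2$ and the cancellation runs through the orders $n^2$, $n\ln n$, $n$ and $(\ln n)^2$ before anything of size $\ln n$ survives; in particular the $(\ln n)^2$ term carried by $m_n^2$ must be matched exactly by the contributions inside the sum that stem from $\sum_{k\le n}\frac{H_k}{k} = \frac12\big(H_n^2 + H_n^{(2)}\big)$, together with $H_n^{(2)} = \frac{\pi^2}{6} + O(1/n)$. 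I would therefore keep $m_j$ in closed form throughout, evaluate the sum exactly, and only expand $H_n$ to the required precision at the very end; the surviving term is then $4\ln n + O(1)$. A crude asymptotic substitution fails here, since it spuriously leaves terms of size $n$ and $(\ln n)^2$.

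Finally, for the central limit theorem I would follow the route already used for the number of internal nodes. Applying the Borel transform $\B$ to the bivariate equation $T(z,u) = z + \big(uz^2 + \frac{z^3}{1-z}\big)\partial_z T(z,u)$ turns the recurrence on $(t_n(u))$ into a linear ordinary differential equation with polynomial coefficients for the analytic function $\B T(z,u) = \sum_n t_n(u)\,z^n/n!$, which can be solved, or at least locally analysed, for $u$ in a complex neighbourhood of $1$. From this equation one locates the dominant singularity in $z$ and its $u$-dependent exponent, and a singularity analysis uniform in $u$, in the spirit of the treatment of Stirling cycle numbers in \cite[p.~644]{FS09}, yields the probability generating function $p_n(u) = \IE_{\T_n}(u^{\X_n})$ in quasi-power shape. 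Since the resulting variance tends to infinity (it is of order $\ln n$), the quasi-powers framework of \cite{FS09} applies and gives $\frac{\X_n - \IE_{\T_n}(\X_n)}{\sqrt{\IV_{\T_n}(\X_n)}} \xrightarrow[]{~d~} \N(0,1)$. The hard part here is controlling the singularity analysis uniformly in $u$ so that the logarithmic scale, rather than the dominant linear factor $u^n$, governs the fluctuations.
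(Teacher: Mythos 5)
Your treatment of the mean and the variance coincides with the paper's: the closed forms displayed just before the statement are expanded via $H_n=\ln n+\gamma+O(1/n)$, and you are right that the variance requires an exact cancellation through the orders $n^2$, $n\ln n$, $n$ and $(\ln n)^2$ before the surviving $4\ln n$ appears — the paper does not spell this out either, saying only that ``the recurrences give the closed form formulas.'' Where you genuinely diverge is the central limit theorem. The paper does not Borel-transform the bivariate equation; it observes that in the recurrence $t_n(u)=(1+u(n-1))\,t_{n-1}(u)+(1-u)(n-2)\,t_{n-2}(u)$ the second term vanishes at $u=1$, and replaces $t_n(u)$ by $\tilde t_n(u)$ satisfying $\tilde t_n(u)=(1+u(n-1))\tilde t_{n-1}(u)$ — exactly the recurrence already solved for the number of internal nodes, whose solution is the explicit product $u\prod_{\ell=2}^{n-1}(1+\ell u)$. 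The Gaussian limit then follows by re-running the characteristic-function computation of Theorem~\ref{theo:limit_law_internal} (the Stirling-cycle argument), with no singularity analysis at all. Your route — Borel transform, a linear ODE (which is second order, since $\B(f')=(\B f)'+z(\B f)''$), a singularity analysis uniform in $u$, then quasi-powers — is viable and in fact more systematic, but it is heavier machinery: you must locate and control a $u$-dependent singularity of a second-order ODE uniformly near $u=1$, precisely the difficulty you flag. Conversely, the paper's shortcut buys explicitness at the price of a point it leaves implicit: one must justify that the discarded perturbation $(1-u)(n-2)\,t_{n-2}(u)$ is negligible \emph{uniformly} for $u$ in a complex neighbourhood of $1$ (it is only pointwise zero at $u=1$), since the limit law is extracted from the behaviour of $t_n(u)/t_n$ for $u\neq 1$; your approach would sidestep that justification entirely. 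Either argument, carried out in full, establishes the theorem.
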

\begin{proof}[Key-ideas]
	The recurrences give the closed form formulas
	for $\IE_{\T_n}(\X_n)$ and $\IV_{\T_n}(\X_n)$.
	Then it is important to notice that $t_n(u)$ can be approximated by $\tilde{t}_n(u)$ verifying
	$\tilde{t}_n(u)=(1+u(n-1))\tilde{t}_{n-1}(u)$. The latter recurrence
	is the same recurrence as the one exhibited for the number of internal nodes
	(equation~\eqref{eq:rec2}).
	Thus, by the same arguments, the sequence of distributions $(t_n(u))$
	converges in distribution to a Gaussian law.
\end{proof}

\subsection{Bijection with permutations}
\label{sec:permut}

Observing the exact value $t_n = n! / 2$ enhances the chances of finding some relation
between our model of increasing trees and a subclass of permutations.
Let us start with this goal.
First, for a size-$n$ permutation $\sigma$ denoted by $(\sigma_1, \dots, \sigma_n)$,
we define $\sigma_i$ to be its $i$-th element (the image of $i$),
and $\sigma^{-1}(k)$ to be the preimage of~$k$ (the position of $k$ in the permutation).
We are now ready to define the recursive map $\M$ between $\mathcal{HP}$,
the class of permutations such that $1$ appears before $2$  and the class $\T$
of increasing Schröder trees.
The base case is the permutation $(1,2)$ which corresponds to the root labeled by~$1$
attached to two unlabeled leaves. Let $\sigma$ be a size-$n$ permutation in $\mathcal{HP}$,
with $n\ge 3$. We observe its the greatest element: if $\sigma_n=n$ then we add a new rightmost
leaf to the last added internal node (the one with the largest label);
otherwise let $k=\sigma^{-1}(n)$, we create a new binary node $\nu$ labeled with a new integer
(the smallest as possible) and attached to two new leaves, then we replace the $k$-th
leaf by this new tree rooted at $\nu$, in the tree under construction based on $\sigma_{\backslash n}$,
i.e. that is $\sigma$ without the greatest element $n$. Remark that during
the tree construction we must traverse the leaves, we can take an arbitrary traversal.

\begin{theorem}
	The map $\M$ is a one-to-one correspondence between $\mathcal{HP}$ and $\T$.
\end{theorem}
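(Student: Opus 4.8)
The plan is to prove that $\M$ is a bijection by exhibiting an explicit inverse map, which is cleaner than a counting argument since the equinumerosity $|\mathcal{HP}_n| = |\T_n| = n!/2$ is already established. First I would verify that $\M$ is \emph{well-defined}: I must check that the tree produced always has distinct increasing labels along every root-to-leaf path and that the labeling is canonical. The subtle point is that $\M$ is described recursively on $\sigma_{\backslash n}$, and the label assigned to a newly created node is ``the smallest possible'' new integer; I would argue that at each recursive step the number of internal nodes increases by exactly one when $\sigma_k = n$ with $k \neq n$ (creating node $\nu$), and stays the same when $\sigma_n = n$ (merely adding a rightmost leaf). The key invariant to track is that after processing the prefix of $\sigma$ consisting of values $\{1,\dots,m\}$, the tree under construction has a fixed number of leaves determined by $m$, so that the phrase ``replace the $k$-th leaf'' is unambiguous given the arbitrary-but-fixed traversal convention.

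Next I would define the inverse map $\M^{-1}: \T \to \mathcal{HP}$ directly, reading it off from the evolution process. Given an increasing Schröder tree of size $n$ with internal nodes labeled $1,\dots,\ell$, I would reconstruct $\sigma$ by reversing the construction: the last building step (node labeled $\ell$, or the last leaf addition) determines where $n$ sits in $\sigma$, and I peel off one value at a time. Concretely, I expect $\sigma^{-1}(n)$ to encode \emph{which} leaf was most recently created or split, reading leaves in the chosen traversal order. The base case $(1,2) \leftrightarrow$ the cherry (root with two leaves) anchors the recursion. I would then show $\M \circ \M^{-1} = \mathrm{id}$ and $\M^{-1} \circ \M = \mathrm{id}$ by induction on $n$, with the inductive hypothesis handling the size-$(n-1)$ object $\sigma_{\backslash n}$.

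The main obstacle I anticipate is the bookkeeping around the two distinct cases in the forward map and making sure they match the two distinct ways an increasing Schröder tree can grow in the evolution process. Specifically, the case $\sigma_n = n$ (append a leaf to the highest-labeled internal node) corresponds in the specification~\eqref{eq:specT2} to extending an existing sequence of children, whereas the case $\sigma^{-1}(n) = k < n$ (create a fresh binary node) corresponds to pointing a leaf and substituting a size-$2$ sequence. I would need to verify that these two moves are exactly the inverse of peeling off the largest value from a permutation in $\mathcal{HP}$, and crucially that the resulting map preserves the constraint ``$1$ before $2$'': the base pair $(1,2)$ is forced, and neither growth move ever inserts a value small enough to disturb the relative order of $1$ and $2$, since the peeled value $n$ is always the largest. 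Once the two operations are shown to be mutually inverse at each step, induction closes the argument; I would keep the traversal-order convention fixed throughout so that ``the $k$-th leaf'' refers to the same leaf in both directions, which is the one place where an inconsistent choice would silently break the bijection.
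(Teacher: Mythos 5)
Your proposal is correct in outline, but it takes a genuinely different route from the paper. The paper's proof is a counting argument: it observes that $\M$ is size-preserving, argues injectivity by induction (at each step the largest element of the permutation is removed and the corresponding tree operation is unambiguous), and then concludes bijectivity from the fact that $|\mathcal{HP}_n| = n!/2 = t_n$, both counts having been established independently (the tree count via the recurrence $t_n = n\cdot t_{n-1}$, the permutation count by the symmetry exchanging $1$ and $2$). You instead construct an explicit two-sided inverse by peeling off the largest value, which avoids any reliance on the prior enumeration. Your plan correctly identifies the one point where the inverse needs an argument: given the tree, one must decide whether the last step appended a leaf or created a fresh binary node, and this is settled by the arity of the internal node carrying the largest label (all its children are necessarily leaves, and it has arity exactly $2$ iff it was just created, arity at least $3$ iff the last move appended a leaf to it). You also rightly flag the fixed-traversal convention and the preservation of the ``$1$ before $2$'' constraint, neither of which the paper's proof makes explicit. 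What each approach buys: the paper's argument is shorter but imports the equinumerosity as an external fact; yours is self-contained, and the explicit inverse is exactly the probabilistic construction underlying the paper's Algorithm~\ref{algo:stronglyTreeBuilder}, so it doubles as a correctness proof for the sampler. The remaining work in your plan is bookkeeping (the two inductions $\M\circ\M^{-1}=\mathrm{id}$ and $\M^{-1}\circ\M=\mathrm{id}$), not a missing idea.
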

\begin{proof}
	The mapping is size preserving: at each iteration we remove exactly 
	one element from the permutation and add exactly one leaf to the tree by
	either adding a leaf to the last exiting node or by killing one leaf and
	adding to new ones.
	The mapping is injective since by induction at each
	iteration we remove the greatest element of the permutation and its
	following its index the actions are performed on the resulting tree
	in a non-ambiguous manner.
	Finally the mapping is based on two classes with the same number
	of elements (of each size).
\end{proof}

\begin{figure}[ht]
	
	\begin{tabular}{l|l} 
		\begin{minipage}{1cm}
			\vspace*{-8mm}
			$(1,2)$
		\end{minipage} 
		\begin{minipage}{1cm}
			\vspace*{-8mm}
			$\xrightarrow{\M}$
		\end{minipage} 
		\includegraphics[scale=0.3]{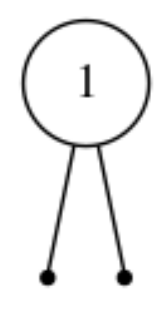}
		&  
		\begin{minipage}{15mm}
			\vspace*{-8mm}
			$(1,2,3)$
		\end{minipage} 
		\begin{minipage}{10mm}
			\vspace*{-8mm}
			$\xrightarrow{\M}$
		\end{minipage}  
		\includegraphics[scale=0.3]{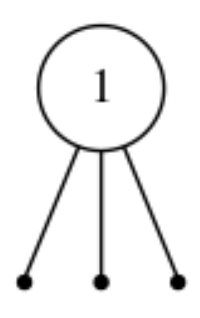}
		\\\hline
		
		\hspace*{-3mm}
		\begin{tabular}{lr}
			\begin{minipage}{18mm}
				\vspace*{2mm}
				$(4,1,2,3)$\\
				
				\hfill{$\xrightarrow{\M}$}
				\vspace*{1cm}
				$\;$
			\end{minipage} & 
			\begin{minipage}{9mm}
				$\;$ \\
				\includegraphics[scale=0.3]{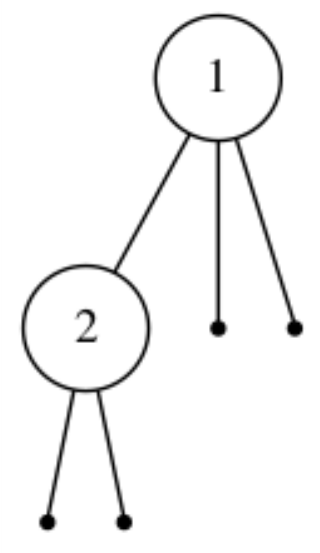}
			\end{minipage}
		\end{tabular}
		&
		\hspace*{-3mm}
		\begin{tabular}{lr}
			\begin{minipage}{18mm}
				\vspace*{1mm}
				$(4,1,2,5,3)$\\
				
				\hfill{$\xrightarrow{\M}$}
				\vspace*{1cm}
				$\;$
			\end{minipage} & 
			\begin{minipage}{11mm}
				$\;$ \\
				\includegraphics[scale=0.3]{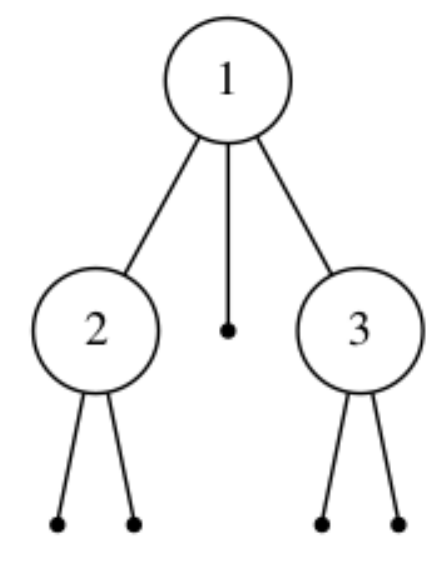}
			\end{minipage}
		\end{tabular}
		\\ \hline
		
		\hspace*{-3mm}
		\begin{tabular}{lr}
			\begin{minipage}{18mm}
				\vspace*{2mm}
				$(4,1,2,5,3,6)$\\
				
				\hfill{$\xrightarrow{\M}$}
				\vspace*{1cm}
				$\;$
			\end{minipage} & 
			\begin{minipage}{12mm}
				$\;$ \\
				\includegraphics[scale=0.3]{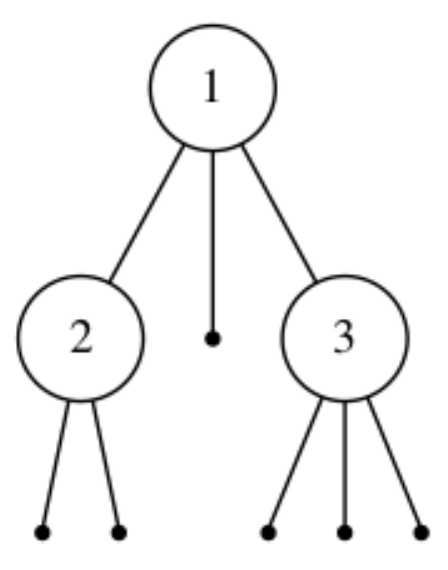}
			\end{minipage}
		\end{tabular}
		&
		\hspace*{-3mm}
		\begin{tabular}{lr}
			\begin{minipage}{20mm}
				\vspace*{1mm}
				$(4,1,2,5,3,6,7)$\\
				
				\hfill{$\xrightarrow{\M}$}
				\vspace*{1cm}
				$\;$
			\end{minipage} & 
			\begin{minipage}{12mm}
				$\;$ \\
				\includegraphics[scale=0.3]{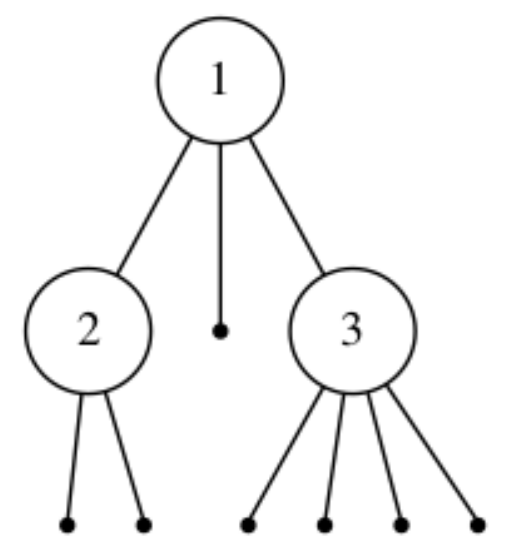}
			\end{minipage}
		\end{tabular}
		\\
		\hline
	\end{tabular} 
	
	\begin{center}
		\begin{minipage}{25mm}
			$(4,1,2,5,3,8,6,7)$
			\vspace*{15mm}
			$\;$
		\end{minipage} 
		\begin{minipage}{1cm}
			\quad   $\xrightarrow{\M}$
			\vspace*{20mm}
			$\;$
		\end{minipage} 
		\includegraphics[scale=0.3]{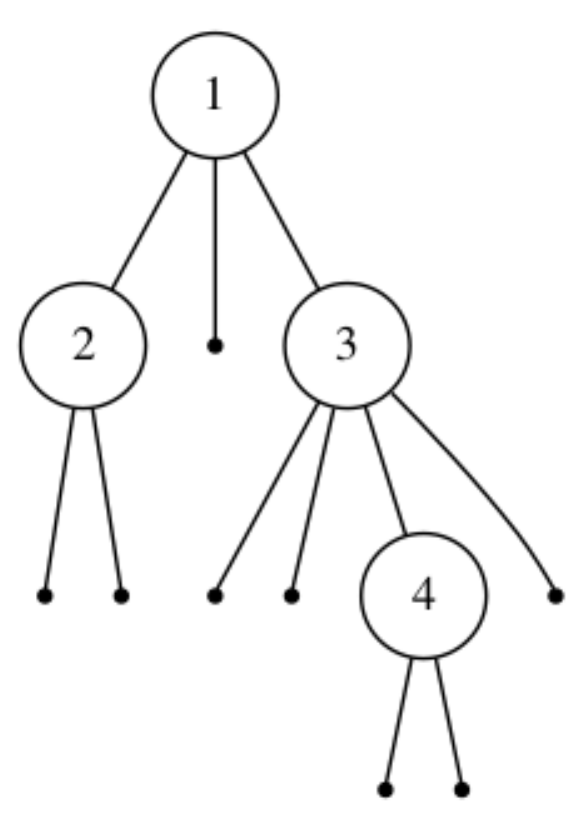}
	\end{center}
	\vspace*{-1.2cm}
	\hrulefill
	\vspace*{2mm}
	
	\caption{A size-$8$ example of the mapping $\M$}
	\label{fig:mapping_ex}
	
\end{figure}
In the Figure~\ref{fig:mapping_ex} we present the mapping
on an example. Remark that we have ordered the steps reversely
to understand the process in a easiest way.

\subsection{Uniform random sampling}
\label{sec:random}

Obviously, through the latter bijection we are able to obtain an uniform random sampler.
It suffices to uniformly sample permutations and to use the bijection to build the
associated increasing Schröder tree.
While there exists fast algorithms to sample 
permutations, see for example~\cite{BBHT17},
using the bijection efficiently is not obvious.

However, through the bijection a direct \textit{probabilistic construction} of increasing Schröder
trees can be obtained.
Such a probabilistic construction presents two main advantages.
Firstly, it simplifies the implementation of a sampler and, secondly,
and more importantly,
it gives a purely probabilistic approach to the original combinatorial class,
which we are interesting in.
This probabilistic approach can be used to compare to other probabilistic tree models
or to exhibit important characteristics of trees on average
using probabilities rather than combinatorics.

We introduce in this section an algorithm to uniformly sample increasing Schröder trees
of a given size~$n$. 
A first remark is that the uniform sampling of structures with increasing labeling
constraints is not so classical in the context of Analytic Combinatorics.
There are some studies by Mart{\'i}nez and Molinero~\cite{MM03,Molinero05}
in the context of the recursive method and
some other about Boltzmann sampling either directly for the method~\cite{BRS12}
or focusing on a specific application~\cite{BDFGH16}.

For the uniform sampling of our evolution process, we are focusing on two goals.
Our fundamental goal consists in controlling the probability distribution used for the sampling.
In fact, we may extract some statistical information based on the samplings, thus the probability
distribution is central. We choose to sample uniformly trees of the same size,
because then we can bias our generator (and tune the bias) to construct other probability distributions.
Secondly our algorithmic framework
must be very efficient to sample large trees (with several thousands of leaves).
Thus a detailed complexity analysis is necessary to be sure that the algorithm cannot be easily
improved.

Our approach is based on the combinatorics underlying the very efficient recurrence
$t_n = n\cdot t_{n-1}$:
a tree of size $n$ can be built from a tree of size $n-1$ in $n$ different ways.
We exhibit a construction based on this recurrence.
This leads to the following iterative algorithm of random sampling.
\begin{algorithm}[h]
	\caption{\label{algo:stronglyTreeBuilder} Increasing Schröder Tree Builder}
	\begin{algorithmic}[1]
		\Function{TreeBuilder}{$n$}
		\If{$n=1$}
		\State \textbf{return} the single leaf
		\EndIf
		\State $T :=$ the root labeled by $1$ and attached to two leaves
		\State $\ell := 2$ 
		\For{\texttt{$i$ from $3$ to $n$}}
		\State $k := rand\_int(1,i)$
		\If{$k=i$}
		\State Add a new leaf to the last added internal node in~$T$
		\Else
		\State Create a new binary node at position $k$ in $T$
		\State \phantom{Create} with label $\ell$ and attached to two leaves
		\State $\ell :=\ell+1$ 
		\EndIf
		\EndFor
		\State \textbf{return} T
		\EndFunction
	\end{algorithmic}
	{\footnotesize
		The function $rand\_int(a,b)$ returns
		uniformly at random an integer in $\{a, a+1, \dots, b\}$.\\
	}
\end{algorithm}
%

\begin{theorem}
	The function \textsc{TreeBuilder($n$)} in Algorithm~\ref{algo:stronglyTreeBuilder}
	is a uniform sampling algorithm for size-$n$ trees.
	Asymptotically, it operates in $O(n)$ operations on trees
	and necessitates $O(n \ln n)$ random bits.
\end{theorem}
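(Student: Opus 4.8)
The plan is to split the statement into a correctness part (uniformity of the output) and a complexity part (running time and random bits), and to treat the former through a direct counting argument resting on the recurrence $t_n = n\cdot t_{n-1}$ of equation~\eqref{eq:rec}.

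For uniformity I would first observe that a run of \textsc{TreeBuilder}$(n)$ is entirely determined by its sequence of draws $(k_3, k_4, \dots, k_n)$ with $k_i \in \{1, \dots, i\}$, and that such a \emph{trace} occurs with probability $\prod_{i=3}^{n} \frac{1}{i} = \frac{2}{n!} = \frac{1}{t_n}$. Since there are exactly $\prod_{i=3}^{n} i = t_n$ traces, it suffices to prove that the map sending a trace to the tree it produces is a bijection onto the size-$n$ increasing Schröder trees. That every output is a valid increasing tree follows from the evolution process of Section~\ref{sec:model}: attaching a leaf to the most recently created internal node, or substituting a leaf by a binary node carrying the next (hence largest) label, both preserve the increasing constraint along each root-to-leaf path. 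As the number of traces equals the number $t_n$ of target trees and each trace yields a valid tree, it remains only to establish injectivity of the trace-to-tree map.

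Injectivity I would prove by a reverse induction on $n$ that reconstructs the trace from the final tree $T'$. The key observation is that the node $\nu$ carrying the maximum label is the last internal node created, and since no label exceeds it, all children of $\nu$ are leaves. Two cases arise: if $\nu$ has arity at least three, the last step was necessarily the ``add a leaf'' branch ($k_n = n$), and removing the most recently added leaf, that is the last child of $\nu$, returns the size-$(n-1)$ tree produced by $(k_3, \dots, k_{n-1})$; if $\nu$ has arity exactly two, the last step was a substitution ($k_n < n$), and contracting $\nu$ together with its two leaves back to a single leaf again returns the size-$(n-1)$ tree. In both cases the induction hypothesis recovers $(k_3, \dots, k_{n-1})$ uniquely, which fixes the leaf-indexing just before step $n$; from that indexing the value $k_n$ is then forced by the position of the reconstructed leaf. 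Hence the map is injective, therefore bijective, and each size-$n$ tree is output with probability $1/t_n$.

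For the complexity I would represent the tree by a dynamic array of its current leaves, indexed exactly as the draws $k$ address them, together with a pointer to the most recently created internal node. Each iteration then costs $O(1)$ tree operations: the ``add a leaf'' branch appends one leaf to the stored last node, while the substitution branch reads the $k$-th leaf in $O(1)$, replaces it by a binary node with two leaves, and updates the array and the pointer in $O(1)$; over the $n-2$ iterations this yields $O(n)$ tree operations. Finally, each call $rand\_int(1,i)$ can be realised by rejection sampling on $\lceil \log_2 i\rceil$ bits with an expected constant number of trials, so the expected total is $\sum_{i=3}^{n} O(\log i) = O(\log n!) = O(n\ln n)$ random bits. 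The only genuinely delicate point is the injectivity argument, and within it the apparent circularity in recovering $k_n$: this is resolved by first reconstructing the earlier trace (hence the indexing) and only then reading off $k_n$.
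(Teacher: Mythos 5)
Your proof is correct and follows essentially the same route as the paper: the paper derives uniformity from the bijection $\M$ with permutations (whose own proof is the same reverse induction on the maximum-label node that you carry out), and handles the complexity with the same leaf-array plus last-internal-node-pointer data structure. The only difference is presentational — you inline the bijection directly between algorithm traces $(k_3,\dots,k_n)$ and trees instead of citing $\M$, which makes the argument self-contained and, in the injectivity step, somewhat more detailed than the paper's sketch.
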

The correctness of the algorithm is a direct consequence of the mapping $\M$. 
it gives the probabilistic construction of trees of $\T$.
Using the adequate data structures, as for example by keeping an array of pointers
to all leaves and another one to the last inserted internal node,
each insertion in the tree under construction is done in constant time.

\section{Weakly Increasing Schröder trees}
\label{sec:weak_inc}

In this section we aim at developing another model for ranked trees
based on Schröder structures. In fact we relax somehow 
the labeling constraint.

\subsection{The model and its context}

Weakly increasing Schröder trees are a generalization of strongly increasing Schröder trees.
The tree structure is still an unlabeled Schröder tree. But the labeling is different.
Internal nodes are labeled between $1$ to $\ell$ in such a way that the sequence
of labels in each path from the root to a leaf is also increasing.
The difference here is that different nodes can have the same label.
This model is also built iteratively.
\begin{itemize}
	\item \emph{Start with a single (unlabeled) leaf;}
	\item Iterate the following process: \emph{at step $\ell$ (for $\ell\geq 1$),
		select a subset of leaves and replace each of them
		by an internal node with label $\ell$
		attached to a sequence of at least two leaves.		} 
\end{itemize}
\begin{figure}[ht]
	\begin{center}
		\includegraphics[scale=0.4]{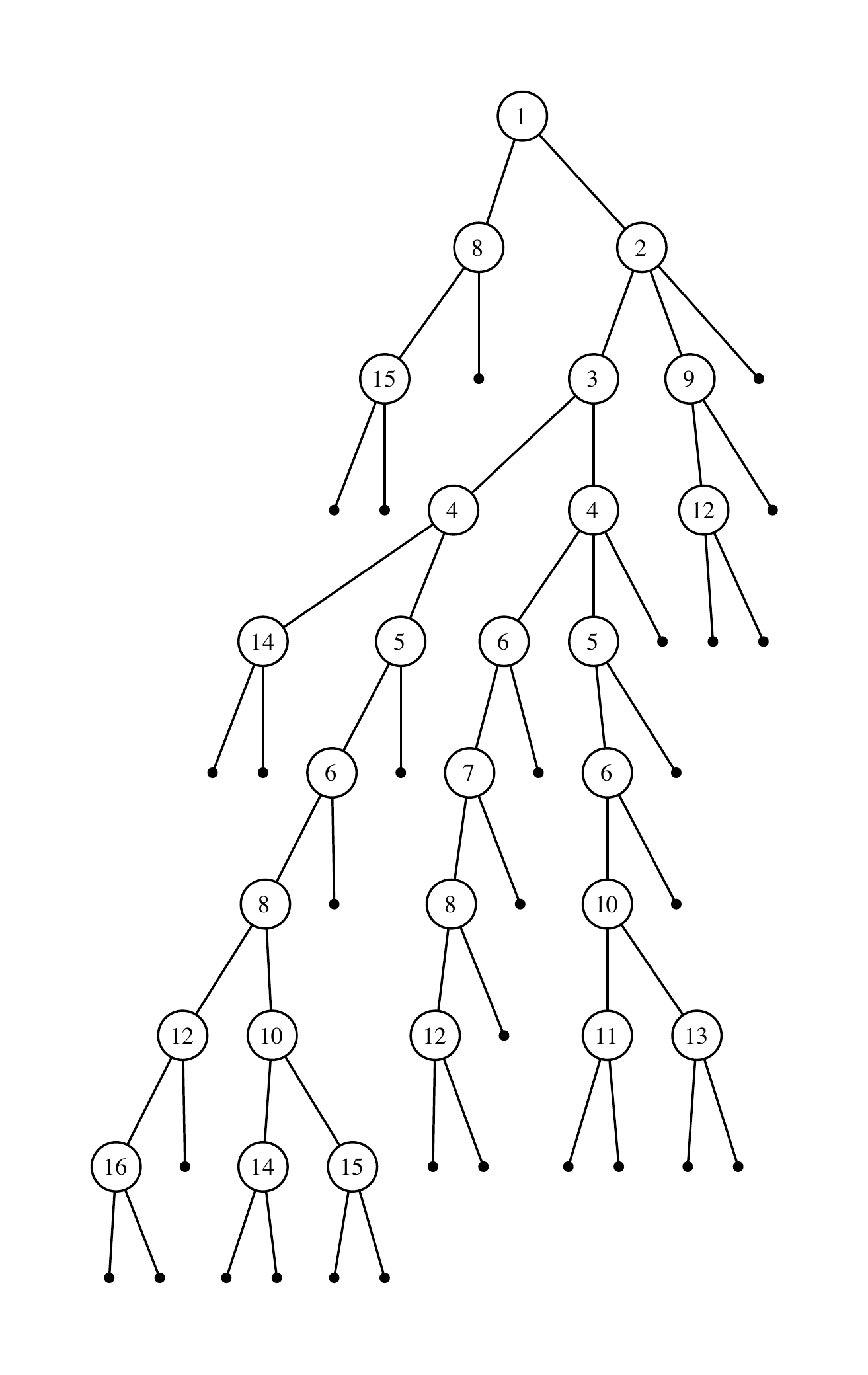}
		\caption{\label{fig:weak_inc_tree_ini}
			A weakly increasing Schröder tree}
	\end{center}
\end{figure}
In Figure~\ref{fig:weak_inc_tree_ini} we present a weakly increasing tree of size~$30$
with $16$ distinct labels..

\subsection{Exact enumeration and relationship with ordered Bell numbers}
\label{sec:enum2}

We can specify the process through the symbolic method.
But once again the labeling is transparent and does not appear in the specification.
\begin{equation}
\label{eq2:weak_funeq}
G(z) = z+ G\left(\frac{z^2}{1-z}+z\right) - G(z).
\end{equation}
At each iteration and for each leaf we can either leave it as it is 
or expand it into a new internal node with at least 2 leaves. 
The configuration where no leaf is expanded is forbidden, thus we remove $G(z)$
in equation~\eqref{eq2:weak_funeq}.
From this equation we extract the recurrence
\begin{equation}
\label{eq:rec_weak}
g_n = \left\{ \begin{array}{l c l}
1 & \quad & \text{if } n=1 \\
\sum\limits_{k=1}^{n-1}\binom{n-1}{k-1} \; g_k. & & \text{otherwise.}
\end{array}
\right.
\end{equation}
The first coefficients correspond to a shift
of the sequence of Ordered Bell numbers 
(also called Fubini numbers) referenced as
\href{https://oeis.org/A000670}{\texttt{OEIS A000670}}.
{\small
	\[g_n = 0,1, 1, 3, 13, 75, 541, 4683, 47293, 545835,  7087261,\dots \]
}
By following the approach developed by Pippenger in~\cite{Pippenger10}
for the derivation of the exponential generating function for ordered Bell numbers
we obtain, by starting from our equation~\eqref{eq2:weak_funeq},
$(\B G)'(z) = 1/(2-e^z)$. Thus, after integration
$ \displaystyle{\B G(z) = \frac{1}{2}\left( z - \ln{(2-e^z)}\right)}$.
Usually ordered Bell numbers are specified by $B=\Seq(\Set_{\geq 1}(z))$.
Obviously this gives the exponential generating function $B(z)= 1/ (2-e^z)$.
Thus, we have proved that our sequence is a shift of the one of ordered Bell numbers.
As a by-product, we have exhibited a new way for specifying ordered Bell numbers.

Recall the $n$-th ordered Bell number, denoted by $B_n$,
counts the total number of partitions of a set of size $n$
where additionally we consider an order over the subsets of the partition.
\begin{equation}
B_n = \sum\limits_{k=0}^n k! \; \stirling{n}{k} \sim \frac{n!}{2 \left(\ln 2\right)^{n+1}},
\end{equation}
where $\stirling{n}{k}$ stands for the 
Stirling partition numbers (also called Stirling numbers of the second kind).
The number $B_n$ corresponds to the number $g_{n+1}$ of 
weakly increasing Schröder trees of size $n+1$.

\subsection{Bijection between ordered Bell numbers and weakly increasing
	Schröder trees}
\label{sec:Bell}

In ordered partitions, the subsets are ordered but the elements inside a subset are not.
In the following let us denote by $p=[p_1,p_2,\dots,p_\ell]$ an ordered partition 
such that $p_i$ is the subset of the partition at position $i$.
For example if $p=[ \{3,4 \} , \{1,5,7 \}, \{ 2,6\} ] $, then $p_1=\{3,4 \}$. 
We denote by $|p_i|$ the size of the $i$-th subset: $|p_1| = 2$.
The total size (i.e. number of elements) of the partition is denoted by $|p|$
Thus the elements of an ordered partition range from $1$ to $|p|$. 

For the exhibition of the correspondence we will use a
canonical order inside the subsets, 
consisting in enumerating the elements increasingly.

Let $p = [p_1, p_2,\dots, p_\ell]$ be an ordered partition,
and $p_i = \{\alpha_1, \alpha_2, \dots, \alpha_r\}$ (with $r\geq 1$),
such that  $\alpha_1< \alpha_2< \dots < \alpha_r$. 
We define a \emph{run} in $p_i$ to be a maximal sequence $\alpha_i, \alpha_{i+1},\dots, \alpha_j$
equal to $\alpha_i, \alpha_{i}+1,\dots, \alpha_i+j-i$.
It is maximal in the sense that $\alpha_{i-1} < \alpha_i - 1$
and $\alpha_{j+1} > \alpha_j+1$.
We define the map $runs$ that lists all the runs of a subset.

For instance, in our example $p$,
in $p_1$ there is a single run: $3,4$ and in $p_2$, there are
$3$ runs.

The mapping deals with incomplete ordered partitions
(in the sense that some integers are not present in the partition).
We define a normalization of a partition, denoted by $norm$,
that maps an incomplete ordered partition of size $k$
into the corresponding ordered partition of size $k$
whose elements are $\{1,\dots,k\}$ and that keeps the relative order between
the elements.
For example by taking the first two subsets from $p$ as $p'=[ p_1,p_2 ]$,
then $p'$ is an incomplete ordered partition of size $5$
and we get $norm(p')=[ \{2,3 \} , \{1,4,5 \} ]$. 

From the ordered partition $p$, the mapping $\M'$
builds the corresponding tree by processing the subsets of the ordered partition successively. 
We start by creating a new ordered partition that contains only $p_1$, $p'=norm([p_1])$.
The size of $p_1$ determines the arity of the root: it equals $|p_1|+1$.
The root label is $1$. Then at each step $i$ with $i \in \{2,\dots,\ell \}$,
we process the subset $p_i$ as follows. 
Normalize the incomplete ordered partition  $[p_1, p_2,\dots, p_i]$.
In the normalized ordered partition the corresponding subset of $p_i$
is denoted by $p'_i$.
The number of new internal nodes is $|runs(p'_i)|$, all labeled by~$i$.
Suppose $runs(p'_i)=[r_1,r_2,\dots,r_j]$ (with each $r_\ell$ a 
set of successive integers and possibly a single one).
Take an order for the leaves in the tree under construction
(the postorder one for example)
and iterate the process:
For $\ell$ from $1$ to $j$, take the leaf whose index is the first element of $r_\ell$
and replace it with an internal node with label $i$ of arity $|r_\ell| +1$.

\begin{figure}[ht]
	\begin{tabular}{c|c} 
		\begin{minipage}{0.22\textwidth}
			\begin{center}
				$[\{3,4\}] \cong [\{1,2\}]$ 
				\\ \vspace*{4mm}
				$\big\downarrow \M'$ \\
				\vspace*{3mm}
				\includegraphics[scale=0.35]{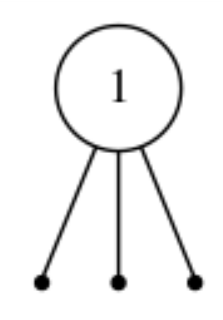}
			\end{center}
			\vspace*{4mm}
			$\;$
		\end{minipage}
		&  
		\begin{minipage}{0.21\textwidth}
			\begin{center}
				$[\{3,4\},\{1,5,7\}]$ \\
				$\cong  [\{2,3\},\{1,4,5\}]$
				\\ \vspace*{4mm}
				$\big\downarrow \M'$\\
				\vspace*{3mm}
				\includegraphics[scale=0.35]{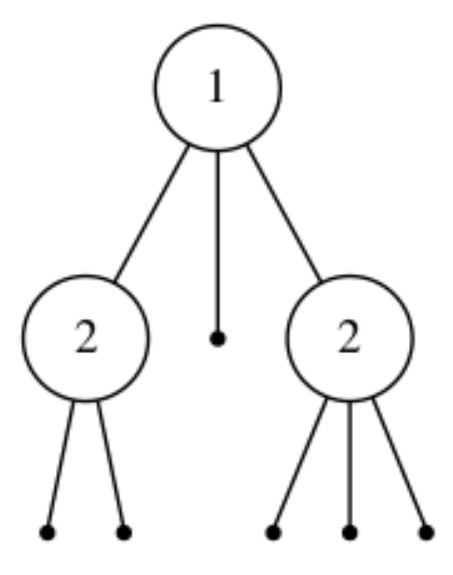}
			\end{center}
		\end{minipage}
	\end{tabular}
	
	\vspace*{-1pt}
	\hrulefill
	\begin{center}
		$[\{3,4\},\{1,5,7\},\{2,6\}] \cong  [\{3,4\},\{1,5,7\},\{ 2,6\}]$
		\\ \vspace*{4mm}
		$\big\downarrow \M'$ \\
		\vspace*{3mm}
		\includegraphics[scale=0.4]{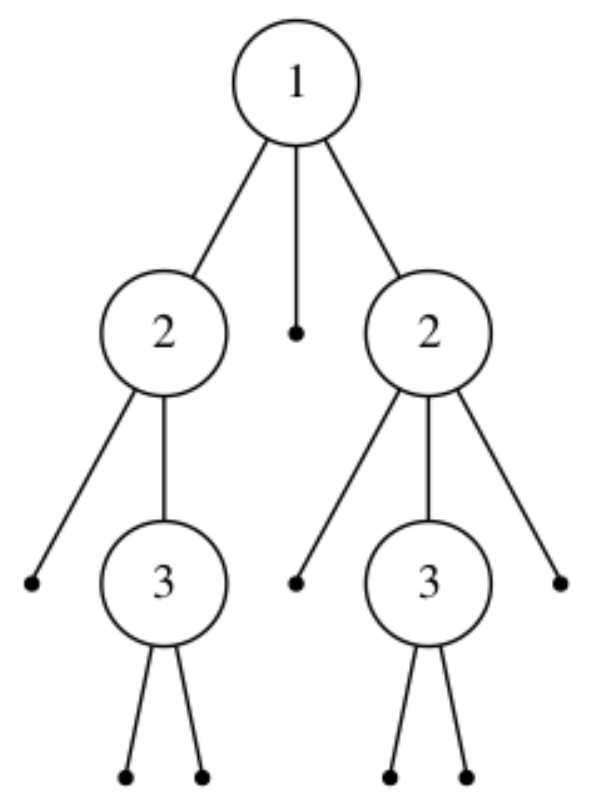}
	\end{center}
	\vspace*{-5mm}
	\hrulefill
	
	\caption{Weakly increasing tree of size $8$}
	\label{fig:bij_weakly}
\end{figure}

In Figure~\ref{fig:bij_weakly} the mapping $\M'$ is applied on our example 
$p=[ \{3,4 \} , \{1,5,7 \}, \{ 2,6\} ] $. The resulting weakly increasing tree
is of size $9$.

\subsection{Analysis of typical parameters}
\label{sec:param2}

\subsubsection*{Quantitative analysis of the number of iteration steps}

In our classical iterative equation, we add a new variable $u$
to mark each iteration.
\begin{equation} \label{eq2:funeq_biv}
G(z,u) = z+ u\; G \left( \frac{z^2}{1-z}+z, u \right) - u \; G(z,u).
\end{equation}
Which leads to the following recurrence,
\begin{equation} \label{weakly:eq:rec}
g_{n,k} = \left\{ \begin{array}{l c l}
1 & & \text{if } n=1, k=0 ,\\
\sum\limits_{j=1}^{n-1}\binom{n-1}{j-1} \; g_{j,k-1} & & \text{otherwise } .\\
\end{array}
\right.
\end{equation}

\begin{wrapfigure}[10]{r}{6.2cm}
	\vspace*{-8mm}
	\begin{center}
		{\footnotesize
			\begin{tabular}{ l l l l l l l l l l l }
				$0$ ,&  & & & & & & & & & \\
				$1$ ,&  & & & & & & & & & \\
				$0$ ,&  $1$, & $2$ , & & & & & & &  &\\
				$0$ ,&  $1$, & $6$, & $6$,  & & & & & &  &\\
				$0$ ,&  $1$, & $14$, & $36$,  &  $24$, & & & & &  &\\
				$0$ ,&  $1$, & $30$, & $150$,  &  $240$, &  $120$,  & & & &  &\\
				$0$ ,&  $1$, & $62$, & $540$,  &  $1560$, &  $1800$,  & $720$  & & &  &\\
			\end{tabular}
		}
		\caption{\label{tab:interne/leaf}
			Distribution of $g_{n,k}$ for $n \in \{0,1,2,\dots, 6\}$}
	\end{center}
\end{wrapfigure}

This recurrence is analogous to the one relating ordered
Bell numbers and Stirling partition numbers.
\begin{theorem}\label{theo:limit_law_internal_in_weak}
	The distribution of the
	the number of building steps in weakly increasing Schröder trees of size $n$ satisfies
	\[
	g_{n,k} = k! \; \stirling{n+1}{k}.
	\]
	Let $\X_n$ be the random variable describing this distribution,
	we have
	\[
	\frac{\X_n -\IE_{\G_n}(\X_n)}{\sqrt{\IV_{\G_n}(\X_n)}} \xrightarrow[]{~d~} \N(0,1),
	\]
	with $\IE_{\G_n}(\X_n) \sim \frac{1}{2\ln2} \; n$ and $\IV_{\G_n}(\X_n) \sim \frac{1-\ln 2}{(2\ln 2)^2} \; n$.
\end{theorem}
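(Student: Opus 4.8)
The plan is to establish the three assertions in turn: the closed form for $g_{n,k}$, then the asymptotics of the mean and variance, and finally the Gaussian convergence.

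For the closed form I would argue by induction on $n$ directly from the recurrence~\eqref{weakly:eq:rec}. Assuming the stated expression of $g_{j,k-1}$ in terms of Stirling partition numbers for all indices below $n$ and substituting it into the binomial convolution $g_{n,k}=\sum_{j}\binom{n-1}{j-1}g_{j,k-1}$, then dividing through by the relevant factorial, the identity to be checked reduces to a standard relation of the form $k\,\stirling{m}{k}=\sum_{i}\binom{m}{i}\stirling{i}{k-1}$ (with the top term $i=m$ absent). This in turn follows from the classical identity $\stirling{m+1}{k}=\sum_{i}\binom{m}{i}\stirling{i}{k-1}$ together with the basic recurrence $\stirling{m+1}{k}=k\,\stirling{m}{k}+\stirling{m}{k-1}$; the base case is read off from $g_{1,0}=1$. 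This step also makes the combinatorial meaning transparent: $g_{n,k}$ is a surjection number, i.e. a count of ordered set partitions into $k$ blocks, so $\X_n$ is the number of blocks in a uniform random ordered set partition, a quantity whose Gaussian behaviour is classical.

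For the limit law I would work with generating functions, extending the univariate Pippenger-style derivation that gave $(\B G)'(z)=1/(2-e^z)$ to the bivariate equation~\eqref{eq2:funeq_biv}. Since $z^2/(1-z)+z=z/(1-z)$ and the marking variable $u$ enters linearly, the Borel transform should satisfy $\partial_z\,\B G(z,u)=\frac{1}{1+u-u\,e^z}$, whence $\B G(z,u)=\int_0^z \frac{\mathrm{d}t}{1+u-u\,e^t}$. For $u$ in a neighbourhood of $1$ the dominant singularity in $z$ is the simple zero $\rho(u)=\ln(1+1/u)$ of the denominator, with $\rho(1)=\ln 2$ recovering the ordered Bell growth. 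Near $z=\rho(u)$ the integrand has a simple pole, so $\B G(z,u)$ carries a logarithmic singularity of the shape $-\frac{1}{1+u}\ln\!\bigl(1-z/\rho(u)\bigr)$, and the transfer theorems give, uniformly for $u$ near $1$, the estimate $[z^n]\,\B G(z,u)\sim \frac{1}{(1+u)\,n}\,\rho(u)^{-n}$.

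Forming the probability generating function $p_n(u)=g_n(u)/g_n(1)=[z^n]\B G(z,u)\big/[z^n]\B G(z,1)$ then puts it in quasi-powers form $p_n(u)=A(u)\,B(u)^{n}\,(1+o(1))$ with $A(u)=2/(1+u)$ and $B(u)=\ln 2/\rho(u)$, both analytic at $u=1$ and satisfying $A(1)=B(1)=1$. Applying the quasi-powers framework (adapting Flajolet and Sedgewick's treatment of Stirling partition and surjection numbers, mirroring Theorem~\ref{theo:limit_law_internal}) yields the convergence $\X_n^\star\to\N(0,1)$, with $\IE_{\G_n}(\X_n)\sim n\,B'(1)$ and $\IV_{\G_n}(\X_n)\sim n\bigl(B''(1)+B'(1)-B'(1)^2\bigr)$. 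A direct computation of $\rho,\rho',\rho''$ at $u=1$ gives $B'(1)=\tfrac{1}{2\ln 2}$ and, after collecting terms, the variance constant $\tfrac{1-\ln 2}{(2\ln 2)^2}$, matching the statement. The main obstacle I anticipate is not these derivative computations, which are routine, but the uniform singularity analysis: one must check that $\rho(u)$ is the unique dominant singularity, that it depends analytically on $u$ throughout a fixed complex neighbourhood of $1$, and that the error terms in the transfer are uniform there, so that the hypotheses of the quasi-powers theorem are genuinely satisfied.
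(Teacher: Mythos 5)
Your argument is correct, but it takes a genuinely more self-contained route than the paper, which disposes of the theorem in two sentences: the closed form for $g_{n,k}$ is read off from the correspondence $\M'$ of Section~\ref{sec:Bell} between weakly increasing Schröder trees and ordered set partitions (so that $\X_n$ is literally the number of blocks of a uniform random ordered set partition), and the Gaussian limit with the stated mean and variance is then quoted from the classical quasi-powers treatment of surjection numbers in Flajolet--Sedgewick. Your inductive derivation of the closed form directly from the recurrence~\eqref{weakly:eq:rec}, via $\sum_{i<m}\binom{m}{i}\stirling{i}{k-1}=k\,\stirling{m}{k}$, replaces the bijection (which the paper describes but never fully verifies), and your bivariate analysis --- $\partial_z\,\B G(z,u)=1/(1+u-ue^z)$, movable simple zero $\rho(u)=\ln(1+1/u)$, logarithmic singularity of $\B G$, quasi-powers --- is exactly the computation the paper's citation stands in for; your constants $B'(1)=\tfrac{1}{2\ln 2}$ and $B''(1)+B'(1)-B'(1)^2=\tfrac{1-\ln 2}{(2\ln 2)^2}$ check out, and the uniformity issues you flag are indeed the only substantive hypotheses to verify (they hold here because the zeros of $1+u-ue^z$ are $\ln(1+1/u)+2ik\pi$, so the dominant singularity is unique, well separated, and analytic in $u$ near $1$). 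One caveat: what your induction actually produces is $g_{n,k}=k!\,\stirling{n-1}{k}$, which is what the paper's own identity $g_{n+1}=B_n$ forces; the index $n+1$ in the printed statement is a typo, and the reduction to the standard Stirling identity that you describe only closes up with $n-1$, so you are proving the corrected statement rather than the literal one.
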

\begin{proof}
	The one-to-one correspondence between weakly increasing Schröder trees and ordered Bell numbers
	gives the combinatorial proof of the distribution for $(g_{n,k})$.
	
	The analysis of the limiting distribution
	is classical in the quasi-powers framework. See for example~\cite[p.~653]{FS09}. 
\end{proof}

%
%
%
%
%

\subsubsection*{Quantitative analysis of the number of internal nodes}

In this model the number of iteration steps does not correspond
to the number of the internal nodes as at each iteration
any subset of leaves can be expanded into internal nodes with
new leaves.
The specification marking both internal nodes and leaves is
\begin{equation} \label{eq2:funeq}
G(z,u) = z+ G\left(\frac{uz^2}{1-z}+z, u \right) - G(z,u).
\end{equation}
We recall that the substitution $G(\frac{uz^2}{1-z}+z)$ means
that for each iteration each leaf can be left as it is
$z$ or expanded into an internal node of unbounded arity with
new leaves $\frac{z^2}{1-z}$.
It is in the second part that an internal node will be created
and thus we mark it with $u$.
\begin{theorem} \label{weakly:internal nodes}
	The average number of internal nodes $\IE_{\G_n}(\X_n)$
	in size $n$	weakly increasing trees verifies
	\[
	\IE_{\G_n}(\X_n) = n-\ln{n} + o(1).
	\]
\end{theorem}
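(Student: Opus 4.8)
The plan is to extract the mean from the first moment of the bivariate series in~\eqref{eq2:funeq}, in which $u$ marks internal nodes. Writing $G(z,u)=\sum_{n,k}g_{n,k}z^nu^k$ and $C(z):=\partial_uG(z,u)\big|_{u=1}=\sum_nc_nz^n$, uniformity over size-$n$ trees gives $\IE_{\G_n}(\X_n)=c_n/g_n$, where $g_n$ is the shifted ordered Bell number from Section~\ref{sec:enum2}. Differentiating~\eqref{eq2:funeq} in $u$ and setting $u=1$ linearises it: since the substitution argument specialises to $f(z):=\frac{z^2}{1-z}+z=\frac{z}{1-z}$ and $\partial_u\!\big(\frac{uz^2}{1-z}+z\big)\big|_{u=1}=\frac{z^2}{1-z}$, I would obtain
\[
2\,C(z)-C\!\left(f(z)\right)=\frac{z^2}{1-z}\,G'\!\left(f(z)\right),
\]
while $G$ itself solves $2\,G(z)=z+G\!\left(f(z)\right)$. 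All of this is read at the level of formal power series, the extracted coefficient sums being convergent.

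The structural key is that $f$ has explicit iterates $f^{(k)}(z)=z/(1-kz)$. Iterating the equation for $G$ yields $G(z)=\sum_{k\ge0}2^{-(k+1)}\,\frac{z}{1-kz}$, which both recovers $g_n=\sum_{k\ge0}k^{\,n-1}2^{-(k+1)}=B_{n-1}$ and gives $G'(f(z))$ in closed form. Substituting this into the equation for $C$ and iterating once more gives $C(z)=\sum_{j\ge0}2^{-(j+1)}R\!\left(f^{(j)}(z)\right)$ with $R(z)=\frac{z^2}{1-z}G'(f(z))$; because every factor $1-mz$ transforms cleanly under $f^{(j)}$ (namely $1-mf^{(j)}(z)=\frac{1-(j+m)z}{1-jz}$), the composed terms simplify and, after swapping the order of summation, collapse to a single explicit sum for $c_n=[z^n]C(z)$. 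Concretely $c_n$ comes out as a combination of sums of the form $\sum_{\ell\ge1}\ell^{\,m}2^{-\ell}\phi(\ell)$, the factor $\phi$ being built from the harmonic-type partial sums $\sum_j(\ell-j)^{-1}=H_{\ell-1}$ and $\sum_j(\ell-j)^{-2}$.

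The asymptotic evaluation is the main obstacle. Each sum $\sum_\ell\ell^{\,m}2^{-\ell}\phi(\ell)$ is, up to normalisation, an ordered-Bell sum that concentrates, by a Laplace/saddle-point estimate, around $\ell^\star\approx m/\ln2\approx n/\ln2$ with width $O(\sqrt n)$, so the slowly varying $\phi$ may be frozen at $\phi(\ell^\star)$. The bare leading contribution produces the linear part $n-1$; the subtle point is that the apparently dominant $O(n\,g_n)$ pieces cancel, and the surviving correction carries the harmonic factor $H_{\ell^\star}\sim\ln\ell^\star\sim\ln n$, which is the source of the logarithmic term, once combined with the ratio $g_{n-1}/g_n\sim(\ln2)/n$ coming from $B_m\sim m!/\big(2(\ln2)^{m+1}\big)$. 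The real work is the bookkeeping: proving the cancellation of the $O(n\,g_n)$ terms, controlling the uniformity of the saddle-point replacement, and tracking every $O(g_n)$ contribution. I would therefore first establish $\IE_{\G_n}(\X_n)=n-c\ln n+O(1)$ and only afterwards pin down the constant $c$ and verify that the bounded remainder collapses into the announced $o(1)$.
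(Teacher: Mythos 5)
Your method is sound and is genuinely different from the paper's: the paper derives the coefficient recurrence $Gu_n=f_n+\sum_k\binom{n-1}{k-1}Gu_k$ for the cumulated number of internal nodes and then ``verifies'' the ansatz $Gu_n/g_n=n-\ln n$ by induction, whereas you solve the linearised functional equation $2C(z)-C(f(z))=\frac{z^2}{1-z}G'(f(z))$ in closed form using the iterates $f^{(k)}(z)=z/(1-kz)$. Pushing your computation one step further than you did gives the exact expression
\[
C(z)=\sum_{\ell\ge1}\frac{2^{-(\ell+1)}z^{2}}{(1-\ell z)^{2}}\Bigl(\ell-z-\sum_{j=1}^{\ell-1}\frac{z}{1-jz}\Bigr),
\]
which checks out against direct tree counts ($c_2=1$, $c_3=5$, $c_4=32$). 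Two small points: your remark that ``the apparently dominant $O(n\,g_n)$ pieces cancel'' is not needed, since the single term $(n-1)B_{n-1}$ already yields the linear part and every other contribution is $O(g_n\log n)$; and the third partial-fraction piece evaluates exactly to $B_{n-1}\sum_{i\ge1}2^{-i}i^{-2}$, so it only shifts the constant.

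The genuine gap is precisely where you stop: you defer the determination of the constant $c$ in $n-c\ln n$, and that determination \emph{is} the theorem. Completing your own calculation, the only log-carrying term is $(n-2)\sum_{\ell}2^{-(\ell+1)}\ell^{n-2}H_{\ell-1}\sim(n-2)\,B_{n-2}\,\ln n$, and since $B_{n-2}/B_{n-1}\sim\ln 2/(n-1)$ this contributes $\ln 2\cdot\ln n$ --- exactly what your heuristic ``$H_{\ell^\star}$ combined with $g_{n-1}/g_n\sim\ln 2/n$'' predicts. Hence the completed argument yields $\IE_{\G_n}(\X_n)=n-(\ln 2)\ln n+O(1)$ with a nonzero constant (numerically $\approx-0.58$), which contradicts the stated $n-\ln n+o(1)$. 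The numerics agree: $(c_n)_{n\ge 2}=1,5,32,248,2261,23749,282706,3763662,\dots$ give $\IE_{\G_9}(\X_9)=6.8952$ against $9-\ln 9=6.8028$, and $\IE_{\G_n}(\X_n)-(n-\ln n)$ grows steadily like $(1-\ln 2)\ln n$. The paper's induction cannot see this because its recurrence, whose weights concentrate at $k=n-O(1)$ with $f_n/g_n\to 2\ln 2$, is equally consistent with the ansatz $n-a\ln n+b$ for \emph{every} $a$ and $b$. So your route is the right one, but as written it does not prove the statement, and carried to completion it refutes it.
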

The main ideas of the proof are in Appendix~\ref{app:weak}.

\subsection{Uniform random sampling}
\label{sec:random2}

We introduce in this section an algorithm to uniformly sample
weakly increasing Schröder trees of a given size $n$
directly, without an intermediate step of generating uniformly an ordered 
partition.

The global approach for our algorithmic framework deals with the
\textit{recursive generation method} adapted to the Analytic Combinatorics
point of view in~\cite{FZVC94}. 
But in our context, we note that we can obtain for free (from a complexity view)
an \emph{unranking algorithm}.
This kind of algorithm has been developed in the 70’s by Nijenhuis and Wilf~\cite{NW75}
and then has been introduced to the context
of Analytic Combinatorics by Mart\'inez and Molinero~\cite{MM03}.
Here the idea is not to draw uniformly an object, but first to define a total order over
the objects under consideration (here weakly increasing Schröder trees)
and then an integer (named the rank) is sampled to build 
deterministically the associated object. Such an approach 
gives also a way to do exhaustive generation
(refer to the paper~\cite{BDGV18} for an example of both methods:
recursive generation and unranking).

For both types of algorithms (recursive generators and unranking ones),
there is a first step of pre-computations (done only once before the sampling
of many objects). We must compute (and store) the numbers of trees
of sizes from $1$ to $n$. Here this phase can be done with a quadratic complexity
(in the number of arithmetic operations) because
of the recursive formula for $g_n$ (cf. equation~\eqref{weakly:eq:rec}).

The second (and last) step for the sampling consists in the recursive construction
of the tree of rank~$r$ that corresponds to an uniformly sampled integer
in $\{0, 1,\dots, g_n-1\}$.
For this purpose, we come back to the original recursive equation~\eqref{eq:rec_weak},
and in particular, we look at the sum over decreasing~$k$:
\[
g_n = \binom{n-1}{n-2} g_{n-1} + \binom{n-1}{n-3} g_{n-2} 
+ \dots + \binom{n-1}{0}g_1.
\]
The latter recurrence is combinatorially easy to understand.
Through the evolution process, to build a size~$n$ tree, 
we take a size $k\in\{1,\dots, n-1\}$ tree constructed
with exactly one less iteration. The binomial coefficient
$\binom{n-1}{k-1}$ corresponds to the number of composition
of $n$ in $k$ parts. Then we traverse 
the tree, and each time we see a leaf, we do the following rule:
if the next part is of value $1$, we leave the leaf unchanged
otherwise for a value $\ell >1$,
we replace the leaf by an internal node (well labeled
with the single new value valid for this step)
and attached to it $\ell$ leaves. We then
take the next part of the composition into consideration
and continue the tree traversal.

In the latter sum, the first term is much bigger than the second one, 
that is must bigger than the third one and so on.
This approach, focusing first on the dominant terms corresponds to the
\emph{Boustrophedonic order} presented in~\cite{FZVC94}.
It allows to improve essentially the average complexity of the random sampling
algorithm. In our case of weakly increasing Schröder trees that
do not follow a standard specification (cf.~\cite{FZVC94}), the complexity gain
is even better. 

\begin{algorithm}[h]
	\caption{\label{algo:TreeBuilder} Weakly increasing Tree Unranking}
	\begin{algorithmic}[1]
		\Function{UnrankTree}{$n, s$}
		\If{$n=1$}
		\State \textbf{return} the tree reduced to a single leaf
		\EndIf
		\State $k := n-1$
		\State $r := s$
		\While{$r >= 0$}
		\State $r := r - \binom{n-1}{k-1}\cdot g_k$
		\State $k := k - 1$
		\EndWhile
		\State $k := k+1$
		\State $r := r + \binom{n-1}{k-1}\cdot g_k$
		\State $s' := r \mod g_k$
		\State $T := $\textsc{UnrankTree}($k,s'$)
		\State $C := $\textsc{UnrankComposition}($n, k, r // g_k$)
		\State Substitute in $T$ some leaves according to $C$ 
		\State \textbf{return} the tree $T$
		\EndFunction
	\end{algorithmic}
	{\footnotesize
		The sequence $(g_k)_{k\leq n}$ and $(\ell!)_{\ell\in\{1,\dots,n\}}$ have been precomputed and stored.\\
		Line $13$: The operation $//$ is the Euclidean division.\\
	}
	\begin{algorithmic}[1]
		\Function{UnrankComposition}{$n, k, s$}
		\If{$n=k$ and $s=0$}
		\State \textbf{return} $[1, 1, \dots, 1]$
		\EndIf
		\State $s' := s$
		\If{$s' < \binom{n-2}{k-1}$}
		\State $C := $\textsc{UnrankComposition}($n-1, k, s'$)
		\State $C[len(C)] := C[len(C)]+1$
		\State \textbf{return} $C$
		\Else
		\State $s' := s' - \binom{n-2}{k-1}$
		\State $ C := $\textsc{UnrankComposition}($n-1, k-1, s'$) $\cup [1]$
        \State \textbf{return} $C$
		\EndIf
		\EndFunction
	\end{algorithmic}
\end{algorithm}

\begin{theorem}
	The function \textsc{UnrankTree} is an unranking algorithm and calling it
	with the parameters~$n$ and an uniformly sampled integer in $\{0,\dots, g_n-1\}$,
	it is an uniform sampler for size-$n$ weakly increasing Schröder trees.
\end{theorem}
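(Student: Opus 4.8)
The plan is to establish correctness of \textsc{UnrankTree} by an induction on the tree size~$n$, showing that for each rank $s \in \{0, \dots, g_n-1\}$ the function returns a distinct size-$n$ weakly increasing Schröder tree, and that every such tree is reached exactly once. The base case $n=1$ is immediate. For the inductive step, I would rely on the decomposition
\[
g_n = \sum_{k=1}^{n-1} \binom{n-1}{k-1} g_k,
\]
which partitions the interval $\{0,\dots,g_n-1\}$ into consecutive blocks indexed by the size $k$ of the tree obtained with one fewer iteration. The \texttt{while} loop subtracts the block sizes $\binom{n-1}{k-1} g_k$ in decreasing order of~$k$ (the Boustrophedonic order), so that after the loop and the single correction step the variable $k$ identifies the block containing~$s$ and $r$ becomes the offset within that block, with $0 \le r < \binom{n-1}{k-1} g_k$. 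The key point is then that the Euclidean division of $r$ by $g_k$ splits this offset unambiguously: the remainder $s' = r \bmod g_k$ selects, by the induction hypothesis, a unique size-$k$ tree, and the quotient $r \,/\!/\, g_k \in \{0,\dots,\binom{n-1}{k-1}-1\}$ selects a unique composition of $n$ into $k$ parts through \textsc{UnrankComposition}.

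Second, I would verify that \textsc{UnrankComposition}$(n,k,\cdot)$ is itself a bijective unranking of the $\binom{n-1}{k-1}$ compositions of~$n$ into~$k$ positive parts. This is a standard Pascal-type recursion: a composition either has last part equal to~$1$, in which case removing it gives a composition of $n-1$ into $k-1$ parts (the $s' \ge \binom{n-2}{k-1}$ branch), or has last part at least~$2$, in which case decrementing it gives a composition of $n-1$ into $k$ parts (the $s' < \binom{n-2}{k-1}$ branch). Since $\binom{n-1}{k-1} = \binom{n-2}{k-1} + \binom{n-2}{k-2}$, the two branches tile the rank interval, and an induction on $n$ shows the map is a bijection onto the set of compositions. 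Finally, I would invoke the combinatorial reading of the recurrence already given in the text: a composition $C = (c_1,\dots,c_k)$ of $n$ into $k$ parts acts on a size-$k$ tree by traversing its leaves in a fixed order and, for each part $c_\ell$, leaving the $\ell$-th leaf untouched if $c_\ell = 1$ or substituting it by a fresh internal node (labeled with the single new step value) carrying $c_\ell$ leaves otherwise; this produces a size-$n$ weakly increasing tree, and every size-$n$ tree arises uniquely from the size-$k$ tree obtained by collapsing its last-labeled internal nodes together with the composition recording their arities.

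The statement about uniform sampling then follows formally: since \textsc{UnrankTree}$(n,\cdot)$ is a bijection from $\{0,\dots,g_n-1\}$ to the size-$n$ trees, feeding it an integer drawn uniformly from that range yields each tree with probability exactly $1/g_n$. The main obstacle I anticipate is not the arithmetic of the ranking intervals, which is routine Pascal-triangle bookkeeping, but the careful justification that the leaf-substitution step is \emph{well-defined and reversible}: one must fix a canonical leaf traversal (the text suggests postorder) and check that the fresh internal nodes, all sharing the new label corresponding to the current iteration, can be recovered unambiguously from the final tree, so that collapsing the last label and reading off the arities recovers both the size-$k$ subtree and the composition. Making this reconstruction precise is where the weak-labeling constraint (several nodes sharing the same label) requires the most attention, and it is essentially the content that distinguishes this unranking from the standard recursive-method setting.
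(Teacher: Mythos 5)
Your proof is correct and follows essentially the same route as the paper: the paper's own argument defines the total order on trees induced by the decreasing-$k$ block decomposition of $\{0,\dots,g_n-1\}$ and the tree/composition split, which is exactly the lexicographic structure your induction makes explicit. Your version is in fact more complete than the paper's sketch, since you also verify the Pascal-type bijectivity of \textsc{UnrankComposition} and correctly isolate the reversibility of the leaf-substitution (collapsing the maximal-label nodes) as the point requiring care.
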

\begin{proof}[Key-ideas]
	The total order for weakly increasing Schröder trees is the following.
	Let $\alpha$ and $\beta$ be two trees.
	If the size of $\alpha$ is strictly smaller than
	the one of $\beta$, we define $\alpha < \beta$.
	Let us suppose that both sizes are equal to $n$. 
	In the recursive construction, let $\tilde{\alpha}$ (and $\tilde{\gamma}_1$ be the tree
	(resp. the composition for the leaf substitution)
	building the tree $\alpha$
	(and respectively $\tilde{\beta}$ and $\tilde{\gamma}_2$ the ones associated to $\beta$).
	If the size of $\tilde{\alpha}$ is strictly greater than the one of $\tilde{\beta}$,
	we define $\alpha < \beta$.
	Let us now suppose that both sizes of $\tilde{\alpha}$ and $\tilde{\beta}$ are equal.
	By using an arbitrary order for the composition unranking,
	we can order $\alpha$ and $\beta$.
	
	This total order over the trees is satisfied by our algorithm: thus this latter is correct.
\end{proof}

\begin{theorem}
	Once the pre-computations have been done,
	the function \textsc{UnrankTree} necessitates $O(n^2)$ arithmetic operations
	to construct any tree of size $n$.
\end{theorem}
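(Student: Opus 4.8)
The plan is to analyze the complexity of \textsc{UnrankTree} by carefully accounting for the cost of each recursive call, bearing in mind that after the precomputation phase all values $g_k$ and factorials $\ell!$ are stored and thus available at unit cost. First I would dissect a single (non-recursive) invocation of \textsc{UnrankTree}$(n,s)$: the \textbf{while} loop that locates the correct value of $k$ runs through at most $n-1$ iterations, each performing a constant number of arithmetic operations (one multiplication $\binom{n-1}{k-1}\cdot g_k$ and one subtraction), so the loop costs $O(n)$ arithmetic operations. The modulus and Euclidean division on line~13 are each a single arithmetic operation under our cost model. Hence the local work done by one frame of \textsc{UnrankTree}, excluding the cost of its recursive descendants (the nested \textsc{UnrankTree} call and the \textsc{UnrankComposition} call), is $O(n)$.

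Next I would bound the cost of \textsc{UnrankComposition}$(n,k,s)$. This function recurses by decrementing its first argument $n$ at each step, terminating when $n=k$, so its recursion depth is $O(n-k)=O(n)$, and each frame does only a constant number of arithmetic operations (a comparison against $\binom{n-2}{k-1}$, possibly one subtraction, and a list append). A subtlety worth flagging is the list-manipulation cost: the operations $C[\mathit{len}(C)] := C[\mathit{len}(C)]+1$ and $C \cup [1]$ must be realized so that each append is $O(1)$ (e.g.\ by building the composition as a dynamic array or a linked list), in which case the total work of \textsc{UnrankComposition} is $O(n)$.

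With these local bounds in hand, I would set up the overall recurrence. Writing $C(n)$ for the total number of arithmetic operations performed by \textsc{UnrankTree}$(n,s)$, the function makes exactly one recursive call \textsc{UnrankTree}$(k,\cdot)$ with $k<n$, one call \textsc{UnrankComposition}$(n,k,\cdot)$ costing $O(n)$, and does $O(n)$ local work, yielding $C(n) \le C(k) + O(n)$ with $k \le n-1$. Since the recursion strictly decreases the size parameter at each level and the work at a frame of size $m$ is $O(m) = O(n)$, the recursion has depth $O(n)$ and each of its $O(n)$ levels contributes $O(n)$ operations, giving $C(n)=O(n^2)$. The substitution step on line~15, which rewrites the sampled leaves of $T$ according to the composition $C$, traverses the tree once and touches $O(n)$ leaves, so it does not change the bound.

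The main obstacle, as I see it, is not the arithmetic-operation count itself, which is a routine telescoping of the per-frame $O(n)$ cost over an $O(n)$-deep recursion, but rather justifying that each elementary step genuinely costs $O(1)$ \emph{arithmetic} operations under the stated model. In particular one must argue that the precomputed binomial coefficients $\binom{n-1}{k-1}$ are available at unit cost (or recomputed incrementally along the \textbf{while} loop via the ratio $\binom{n-1}{k-2} = \binom{n-1}{k-1}\cdot(k-1)/(n-k+1)$ so that each loop iteration stays $O(1)$), and that the list and tree manipulations are implemented with constant-time appends and a single linear-time final leaf substitution. Once these bookkeeping points are settled, the $O(n^2)$ bound follows immediately from the depth-times-width argument above.
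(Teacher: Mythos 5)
Your proof is correct and its overall accounting (local work per frame, plus one recursive call, summed over a recursion of depth $O(n)$) matches the paper's. The one substantive difference is in how the \texttt{while} loop is charged: you bound it by $O(n)$ per frame and absorb it into the depth-times-width estimate, whereas the paper observes that the number of iterations of the \texttt{while} loop in a call equals the number of new leaves created at that step, so these iterations telescope to $O(n)$ \emph{in total} over the whole recursion. Both give $O(n^2)$ for the theorem as stated, but the paper's sharper accounting isolates the quadratic term entirely in the calls to \textsc{UnrankComposition} (each costing $O(n)$, over $O(n)$ levels), which is what underlies the remark following the theorem that a better composition unranking (or an average-case analysis exploiting that $n-k$ is typically small) would improve the overall complexity; your coarser per-frame bound hides this. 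Your bookkeeping points are well taken and easily settled: the binomial coefficients are $O(1)$ arithmetic operations from the precomputed factorials, and the list and tree manipulations are not arithmetic operations in the stated cost model in any case.
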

Due to the fact that usually the difference between $n$ and $k$ is very small,
a detailed analysis of the average case, or an more adapted composition unranking
should give a better complexity analysis.
In fact as we have seen before,
in a large typical tree, there are in average $n-\ln n$ internal nodes and 
thus most of them must be of arity $2$ and are given by the first term
in the latter sum defining $g_n$.

\begin{proof}[Proof-ideas]
	The main idea is the following: during a call to \textsc{UnrankTree},
	there are exactly the same number of new leaves in the tree under construction
	to the number of loops in the \texttt{while} instruction on Line 6.
	Outside this \texttt{while} block,
	the number of arithmetic operations is essentially due to the unranking algorithm for compositions.
	The actual version of this algorithm induces a quadratic complexity in the number of
	arithmetic operations.
	The unranking algorithm for the composition is based on 
	the classical result about the composition
	of $n>0$ in $k\in\{1, \dots, n\}$ parts:
	\begin{align*}
	C_{n,k} &= \binom{n-1}{k-1} \\
	&= C_{n-1,k} + C_{n-1,k-1}. 
	\end{align*}
\end{proof}


\bibliography{biblio}
\newpage
\appendix

\section{Appendix related strongly increasing Schröder trees: Section~\ref{sec:high_inc}}
\label{app:high_inc}

The Borel transform consists in the following transformation on ordinary generating series:
\[
\B \left(\sum_{n\geq 0} a_n z^n\right) = \sum_{n\geq 0} a_n \; \frac{z^n}{n!}.
\]

\begin{lemma}
	Using Borel transform formula on formal series, we easily derive the following identities:
	\begin{enumerate}[(i)]
		\item $\B (zf) (z) = \int_{0}^z \B f \mathrm{d}t$;
		\item $\B (f') (z) = (\B f)'(z) + z (\B f)''(z)$.
	\end{enumerate}
\end{lemma}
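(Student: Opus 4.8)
The plan is to prove both identities by a purely formal, term-by-term computation, starting from a generic formal power series $f(z)=\sum_{n\geq 0} a_n z^n$ and applying the definition of the Borel transform directly. Since everything takes place at the level of formal series, no convergence issue arises and each identity reduces to matching the coefficient of $z^n$ on the two sides, after which one concludes by linearity.

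For $(i)$, I would first write $zf(z)=\sum_{n\geq 0} a_n z^{n+1}$, so that $\B(zf)(z)=\sum_{n\geq 0} a_n \frac{z^{n+1}}{(n+1)!}$ straight from the definition. On the other side, integrating the series of $\B f$ termwise gives $\int_0^z \B f\,\mathrm{d}t = \sum_{n\geq 0} a_n \frac{z^{n+1}}{(n+1)\,n!}$, and the elementary relation $(n+1)\,n!=(n+1)!$ makes the two expressions coincide. The only thing to watch here is the index shift $m=n+1$.

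For $(ii)$, I would compute the three relevant series separately and then recombine. The left-hand side is $\B(f')(z)=\sum_{m\geq 0}(m+1)a_{m+1}\frac{z^m}{m!}$, read off from $f'(z)=\sum_{m\geq 0}(m+1)a_{m+1}z^m$. For the right-hand side, termwise differentiation yields $(\B f)'(z)=\sum_{m\geq 0} a_{m+1}\frac{z^m}{m!}$ and then $z(\B f)''(z)=\sum_{m\geq 1} a_{m+1}\frac{z^m}{(m-1)!}$. Adding these two and invoking the key algebraic step $\frac{1}{m!}+\frac{1}{(m-1)!}=\frac{m+1}{m!}$, valid for $m\geq 1$ while the $m=0$ term contributes $a_1$ on both sides, recovers exactly $\B(f')(z)$.

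The computation is entirely routine; the only genuine obstacle is the bookkeeping, namely keeping the index shifts and the boundary term $m=0$ consistent across the two differentiations of $\B f$. Because each identity holds coefficientwise, it suffices to verify it on the monomials $z^n$ and extend by linearity, which is precisely what the formal power series setting guarantees.
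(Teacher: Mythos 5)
Your proof is correct: both coefficient-wise computations check out, including the index shifts and the $m=0$ boundary term in (ii). The paper itself omits the proof entirely (asserting the identities are "easily derived"), and your term-by-term verification is precisely the routine formal-series computation being alluded to, so nothing further is needed.
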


\noindent We are now ready to prove Proposition~\ref{prop:bivt}.
\begin{proof}[Proof of Proposition~\ref{prop:bivt}]
	Applying Borel on equation~\eqref{eq:specbiv_T2} and 
	using properties (i) and (ii) we obtain
	\[
	T(z,u)=zuT(z,u)+(1-u)\cdot \int_{0}^z T(z,u) dz -\frac{z^2}{2}+z.
	\]
	Then by differentiating by z 
	\[
	\frac{\partial (1-zu)T(z,u)}{\partial z}=\frac{\partial (1-u)\cdot \int_{0}^z T(z,u) dz -\frac{z^2}{2}+z}{\partial z}.
	\]
	Thus, after simplifications
	\[
	(1-zu)\frac{\partial T(z,u)}{\partial z}=T(z,u) -z+1 \quad \text{with } T(0,0)=1.
	\]
	Solving the differential equation gives the stated result.
\end{proof}

\vspace*{1cm}

Let us denote by $\X_n$ the random variable corresponding to the 
to number of internal nodes in increasing Schröder trees of size $n$.
Proposition~\ref{prop:var} aims at proving the mean value and the 
variance of $\X_n$.
\begin{proof}[Proof of Proposition~\ref{prop:var}]
	Recall that the mean and variance can be computed mechanically from the 
	bivariate generating function
	\begin{align*}
	\IE_{\T_n}(\X_n) &= \frac{[z^n] \partial_{u} T(z,u)_{\mid_{u=1}} }{[z^n]T(z,1)}, \text{ and}\\
	\IE_{\T_n}(\X_n^2)  &=  \frac{[z^n] \partial^2_{u} T(z,u)_{\mid_{u=1}} }{[z^n]T(z,1)} 
	+  \frac{[z^n] \partial_{u} T(z,u)_{\mid_{u=1}} }{[z^n]T(z,1)},
	\end{align*}
	where 
	$\cdot_{\mid_{u=1}}$ stands for the substitution of $u$ by 1.
	{\footnotesize
		\begin{align*}
		\mathbf{E}_{T_n}[\mathcal{X}] & = \frac{[z^n] \partial_{u} T(z,u) \mid_{u=1} }{[z^n]T(z,1)} \\
		& = [z^n] 
		\left( \frac{z}{(1-z)^2}-\frac{1}{1-z}\ln\left(\frac{1}{1-z}\right) \right)  \\
		&\hspace*{1cm} +  \frac{1}{2}\left(\frac{1}{1-z}-z-1\right) \\
		& = n-H_n+\frac{1}{2} 
		\end{align*}
	}
	Let $n\ge 2$, the mean value of $\X_n$ is equal to
	\[
	\IE_{\T_n}[\X_n^2] = n(n-1) -2n(H_n-1) + \sum_{k=1}^{n-1} \frac{1}{n-k} H_k,
	\]
	and thus
	\begin{align*}
	\IE_{\T_n}[\X_n^2] = &n(n-1) -2n\; \ln n  -2(\gamma -1)n   +  \ln^2 n \\
	&+2\gamma\; \ln n +\gamma^2 - \frac{\pi^2}{6} + O\left(\frac{\log n}{n}\right).
	\end{align*}
	In the same vein, when $n$ tends to infinity, we get
	\[
	\IV_{T_n}[\X] = \ln n + \gamma -\frac{\pi^2}{6}-\frac{5}{4} + O\left(\frac{\log n}{n}\right).
	\]
\end{proof}
We are now ready to prove the limit distribution for $\X_n$.
\begin{proof}[Proof of Theorem~\ref{theo:limit_law_internal}]
	This proof is an adaptation on Flajolet and Sedgewick's proof on the limit Gaussian
	law of Stirling Cycle numbers~\cite[p.~644]{FS09}
	
	We take the probability generating function of $\hat{T}_n(u)$ it is obvious that if $\frac{t_{n,k}}{t_n}$ is a limit Gaussian law then so is $\frac{\hat{t}_{n,k}}{t_n}=\frac{t_{n,n-k}}{t_n}$. We will just get the mirror of the probability the standard deviation will not change $\hat{\sigma_n}=\sigma_n$ and the mean will be the mirror mean so $\hat{\mu}_n=n-\mu_n$.
	
	\[ \hat{p}_n(u) = \frac{2u(u+2)(u+3)\dots(u+n-1)}{n!}.\]
	Thus we have 
	\[ p_n(u) = \frac{2\Gamma(u+n)}{(u+1)\Gamma(u)\Gamma(n+1)}.\]
	Near $u=1$ we find an estimate of $p_n(u)$ using Stirling formula for the Gamma function
	{\small
		\[ p_n(u) = \frac{ n^{u-1}}{\Gamma(u)}\left(1+O\left(\frac{1}{n}\right)\right)=\frac{ \left( e^{u-1} \right)^{\log n}  }{\Gamma(u)}\left(1+O\left(\frac{1}{n}\right)\right).\]
	}
	
	Now we can study the standardized random variable $\hat{X}^\star_n=\frac{\hat{\mathcal{X}}-\hat{\mu}_n  }{\hat{\sigma}_n}$.
	The standardization of a random variable can be translated directly on the characteristic function.
	\[  \phi_{X^\star_n}(t) = e^{-it\frac{\mu}{\sigma}} \phi_{X_n}(\frac{t}{\sigma}).\]
	{\footnotesize
		\begin{align*}
		\phi_{\hat{X}^\star_n}(t) = &e^{-it\frac{\log n-\gamma+\frac{1}{2}}{\sqrt{\log n + \gamma -\frac{\pi^2}{6}-\frac{5}{4}} } } \\
		& \cdot \frac{ \left( exp(\log n (e^{\frac{it}{\sqrt{\log n + \gamma -\frac{\pi^2}{6}-\frac{5}{4}}}}-1)) \right)  }{\Gamma(u)} \left(1+O\left(\frac{1}{n}\right)\right).
		\end{align*}
	}
	For a fixed $t$ and as $n \rightarrow \infty$,
	\[  \log \phi_{\hat{X}^\star_n}(t) = -\frac{t^2}{2} + O(\frac{1}{\log n}).\]
	This last result is obtained by limited development of $\log n \cos{\frac{t}{\sqrt{\log n + \gamma -\frac{\pi^2}{6}-\frac{5}{4}}}}$.
	Finally we have 
	\[    \phi_{\hat{X}^\star_n}(t) \sim e^{-\frac{t^2}{2}},\]
	which is the characteristic function of the Gaussian law.
\end{proof}

\section{Appendix related to weakly increasing Schröder trees:
	Section~\ref{sec:weak_inc}}
\label{app:weak}

Derivation for the exponential generating function
for $(g_n)$:
We have,
\[ g_n = \delta_n +  \sum\limits_{k=1}^{n-1}\binom{n-1}{k-1}g_k.\]
Where $\delta_l$ is $1$ for $l=1$ and $0$ otherwise. Adding $g_n$ to both sides gives 
\[ 2 g_n = \delta_n +  \sum\limits_{k=1}^{n}\binom{n-1}{k-1}g_k.\]
Finally multiplying both sides by $\frac{z^l}{l!}$ and summing over all $l\geq 0$
\[2G(z)=z+  \sum\limits_{l \geq 0} \frac{z^l}{l!}\sum\limits_{k=1}^{n}\binom{n-1}{k-1}g_k\]
Deriving this last equation yields to the equation of Ordered Bell number. which has been studied by different authors. 
See \cite{Pippenger10} for a derivation of the exponential generating function,
\[G(z)' = \frac{1}{2-e^z}.\]
Finally we have,
\[ G(z) = \frac{1}{2}\left( z - \ln{(2-e^z)}\right).\]

Proof of the theorem \ref{weakly:internal nodes}
We define $f_n =\sum\limits_{k=0}^{n-2}\binom{n-2}{k}(k+1)g_{k+1}$
\begin{lemma}$\frac{f_n}{g_n} \sim C$ where $C \approx 1.38$ is a constant 
\end{lemma}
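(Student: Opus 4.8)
The plan is to read off the constant $C=2\ln 2$ by recognising $f_n$ as a binomial convolution and performing a singularity analysis on the underlying exponential generating functions. First I would reindex the sum. Setting $m=n-2$ and $h_i=(i+1)g_{i+1}$, we have $f_n=\sum_{i=0}^{m}\binom{m}{i}h_i$, which is the classical binomial (Borel-type) convolution of the sequence $(h_i)$ with the all-ones sequence. Hence, writing $H(z)=\sum_{i\ge 0}h_i\,z^i/i!$ for the exponential generating function of $(h_i)$, one has $\sum_{m\ge 0}f_{m+2}\,z^m/m!=e^z H(z)$, so that $f_n=(n-2)!\,[z^{n-2}]\,e^z H(z)$.

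Next I would express $H$ in terms of the series already computed in Section~\ref{sec:enum2}. From equation~\eqref{eq2:weak_funeq} we know $(\B G)'(z)=\sum_{i\ge 0}g_{i+1}z^i/i!=1/(2-e^z)$. Since $\sum_{i\ge 0}(i+1)g_{i+1}z^i/i!=\frac{d}{dz}\big(z\,(\B G)'(z)\big)$, we obtain $H(z)=(\B G)'(z)+z\,(\B G)''(z)$, and substituting $(\B G)'(z)=1/(2-e^z)$ together with $(\B G)''(z)=e^z/(2-e^z)^2$ yields the closed form
\[
e^z H(z)=\frac{e^z}{2-e^z}+z\,\frac{e^{2z}}{(2-e^z)^2}.
\]
This is a meromorphic function whose dominant singularity is the zero $z=\ln 2$ of $2-e^z$; the remaining zeros $\ln 2+2\pi i k$ (for $k\neq 0$) have strictly larger modulus, so the dominant singularity is unique.

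Then I would carry out the singularity analysis. Writing $w=z-\ln 2$, one has $2-e^z=-2w\big(1+w/2+O(w^2)\big)$, so the first summand contributes only a simple pole, while the second summand $z\,e^{2z}/(2-e^z)^2$ has a double pole with principal part $\ln 2/w^2$. The transfer of this dominant double pole gives $[z^{n-2}]\,e^z H(z)\sim (n-1)/(\ln 2)^{n-1}$, hence $f_n\sim (n-1)!/(\ln 2)^{n-1}$; the simple-pole contributions only perturb this by a relative $O(1/n)$ term. For the denominator I would use the asymptotics already recorded, namely $g_n=B_{n-1}\sim (n-1)!/\big(2(\ln 2)^n\big)$, which corresponds to the simple pole of $(\B G)'$ (equivalently the logarithmic singularity of $\B G$). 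Taking the quotient, the factors $(n-1)!$ cancel and the powers of $\ln 2$ combine to give $f_n/g_n\to 2\ln 2\approx 1.386$, which is the announced constant $C$.

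The main obstacle is the careful bookkeeping in the singular expansion: one must confirm that the double-pole term $\ln 2/w^2$ genuinely dominates and extract its coefficient correctly (keeping track of the factor $z=\ln 2+w$ multiplying $e^{2z}/(2-e^z)^2$), while checking that the simple poles coming from both summands are of strictly lower order so that they do not affect the leading constant. Once that coefficient is pinned down, the remaining steps are routine applications of the transfer theorems for meromorphic functions together with the cancellation of the $(n-1)!$ factors and of the powers of $\ln 2$ in the ratio.
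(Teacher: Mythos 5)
Your proposal is correct, and it reaches a stronger conclusion than the paper states: it identifies the limit explicitly as $C=2\ln 2\approx 1.386$, whereas the lemma only records the numerical value. Your route is genuinely different from the paper's. The paper works entirely at the level of coefficients: it substitutes Wilf's explicit expansion $g_n=\frac{(n-1)!}{2(\ln 2)^n}+O(\gamma^{n-1}(n-1)!)$ into the binomial sum defining $f_n$ and evaluates the resulting sums termwise (the dominant contributions coming from $k$ close to $n-2$, which resum to $\ln 2\cdot e^{\ln 2}=2\ln 2$, although the paper's computation never makes the constant explicit). You instead work at the level of generating functions: the binomial convolution turns $f_n$ into $(n-2)!\,[z^{n-2}]\,e^zH(z)$ with $H=(\B G)'+z(\B G)''$ and $(\B G)'=1/(2-e^z)$, after which the constant falls out of the principal part $\ln 2/(z-\ln 2)^2$ of the dominant double pole. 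Your identity $H(z)=\frac{d}{dz}\bigl(z(\B G)'(z)\bigr)$ and the resulting closed form are correct, as is the extraction $[z^{n-2}]\frac{\ln 2}{(z-\ln 2)^2}=\frac{n-1}{(\ln 2)^{n-1}}$ and the final quotient $2\ln 2$. What each approach buys: the paper's coefficient manipulation is self-contained given Wilf's error bound and makes the $O(1/n)$ error rate visible directly, while your meromorphic analysis is cleaner, pins down the constant exactly, and localises all the asymptotic information in the single pole at $z=\ln 2$ (for a meromorphic function one does not even need the full transfer machinery: subtracting the principal parts leaves a function analytic up to the next zero $\ln 2+2\pi i$ of $2-e^z$, whence an exponentially small relative error from the tail and a relative $O(1/n)$ from the simple-pole terms, exactly as you claim).
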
 
\begin{proof}
	Wilf has given an approximation of the error term of Ordered Bell numbers in~\cite{wilb} which we can use,
	\[ g_n = \frac{(n-1)!}{2\ln(2)^n}+O(\gamma^{n-1}(n-1)!),\]
	with $\gamma = \frac{1}{\sqrt{ln(2)^2 + 4\pi^2}} \approx 0.16\dots$.
	
	Remark that $\frac{1}{ln(2)} \approx 1.44\dots > \gamma$.
	Now,
	{\small
		\begin{align*}
		\frac{f_n}{g_n} & = \frac{ \sum\limits_{k=0}^{n-2}\binom{n-2}{k}(k+1) \left( \frac{k!}{2\ln(2)^{k+1}}+O(\gamma^{k}k!) \right)}{\frac{(n-1)!}{2\ln(2)^n}+O(\gamma^{n-1}(n-1)!)} \\
		& =  \frac{ \sum\limits_{k=0}^{n-2}\binom{n-2}{k}(k+1) \left( \frac{k!}{2\ln(2)^{k+1}} \right)}{\frac{(n-1)!}{2\ln(2)^n}+O(\gamma^{n-1}(n-1)!)} + O(\frac{1}{n})\\
		& =  \frac{ \sum\limits_{k=0}^{n-2}(k+1) \left( \frac{1}{\ln(2)^{k+1}} \right)}{\frac{(n-1)}{\ln(2)^n}+O(\gamma^{n-1}(n-1)!)} + O(\frac{1}{n})\\
		& =  \frac{ n(\sum\limits_{k=1}^{n-1}\frac{1}{(n-1-k)!\ln2^k)}) - ( \sum\limits_{k=1}^{n-1}\frac{n-k}{(n-1-k)!\ln2^k}) }{\frac{(n-1)}{\ln(2)^n}+O(\gamma^{n-1}(n-1)!)} + O(\frac{1}{n})\\
		& =  \frac{ n(\sum\limits_{k=1}^{n-1}\frac{1}{(n-1-k)!\ln2^k)} - ( \sum\limits_{k=1}^{n-1}\frac{n-k}{(n-1-k)!\ln2^k}) }{\frac{(n-1)}{\ln(2)^n}+O(\gamma^{n-1}(n-1)!)} + O(\frac{1}{n})\\
		& =  c +  O(\frac{1}{n})
		\end{align*}
	}
\end{proof}

Then for large $n$, we can show the result by induction. Taking\\ $Gu_n=\frac{(n-1)!}{2\ln(2)^n}(n-\ln n)  +O(\gamma^{n-1}(n-1)!)$.
{\footnotesize
	\begin{align*}
	&\IE_{\G_n}(\X_n)  = \frac{Gu_n}{g_n} & \\
	&= c+ O(\frac{1}{n})+\frac{\sum\limits_{k=1}^{n-1}\binom{n-1}{k-1} \left( \frac{(k-1)!}{2\ln(2)^k}(k-\ln k  +O(\gamma^{k-1}(k-1)!) \right) }{\frac{(n-1)!}{2\ln(2)^n}+O(\gamma^{n-1}(n-1)!)} \\
	& = c+ O(\frac{1}{n})+\frac{\sum\limits_{k=1}^{n-1}\binom{n-1}{k-1} \frac{(k-1)!}{2\ln(2)^k}k - \sum\limits_{k=1}^{n-1}\binom{n-1}{k-1} \frac{(k-1)!}{2\ln(2)^k}\ln k}{\frac{(n-1)!}{2\ln(2)^n}+O(\gamma^{n-1}(n-1)!)}\\
	& \quad + \frac{\sum\limits_{k=1}^{n-1}\binom{n-1}{k-1} O(\gamma^{k-1}(k-1)!) }{\frac{(n-1)!}{2\ln(2)^n}+O(\gamma^{n-1}(n-1)!)} \\
	\end{align*}
}
Thus we get:
{\footnotesize
	\begin{align*}
	&\IE_{\G_n}(\X_n)  =\\
	& = c+ O(\frac{1}{n})+ n +c' +\frac{-\sum\limits_{k=1}^{n-1}\binom{n-1}{k-1} \frac{(k-1)!}{2\ln(2)^k}\ln n}{\frac{(n-1)!}{2\ln(2)^n}+O(\gamma^{k-1}(k-1)!)} \\
	& \quad + \frac{\sum\limits_{k=1}^{n-1}\binom{n-1}{k-1} O(\gamma^{k-1}(k-1)!) }{\frac{(n-1)!}{2\ln(2)^n}+O(\gamma^{k-1}(k-1)!)}\\
	& = c+ O(\frac{1}{n})+ n +c'- \ln n +\frac{ \sum\limits_{k=1}^{n-1}\binom{n-1}{k-1} O(\gamma^{k-1}(k-1)!) }{\frac{(n-1)!}{2\ln(2)^n}+O(\gamma^{n-1}(n-1)!)}\\
	& = c+ n +c'- \ln n +O(\frac{1}{n})\\
	& \sim n - \ln n.
	\end{align*}
}

\end{document}